\setlist[itemize]{noitemsep} % Make itemize lists more compact
\newcommand{\sd}{\Sigma\Delta}
\newcommand{\R}{\mathbb{R}}
\renewcommand\thesection{\Roman{section}} % Roman numerals for the sections
\renewcommand\thesubsection{\roman{subsection}} % roman numerals for subsections
\titleformat{\section}[block]{\large\scshape\centering}{\thesection.}{1em}{} % Change the look of the section titles
\titleformat{\subsection}[block]{\large}{\thesubsection.}{1em}{} % Change the look of the section titles
\newtheorem{Theor} {Theorem}
\newtheorem{rk}{Remark}
\newtheorem{cor}{Corollary} 
\newtheorem{defi}{Definition}
\newtheorem{lem}{Lemma}
\title{On one-stage recovery for $\Sigma \Delta$-quantized compressed sensing} % Article title
\author{%
\textsc{Arman Arian and \"{O}zg\"{u}r Y\i lmaz} \\[1ex] % Your name
\normalsize Department of Mathematics, \\ University of British Columbia \\ % Your institution
%\normalsize \href{mailto:john@smith.com}{john@smith.com} % Your email address
%\and % Uncomment if 2 authors are required, duplicate these 4 lines if more
%\textsc{Jane Smith}\thanks{Corresponding author} \\[1ex] % Second author's name
%\normalsize University of Utah \\ % Second author's institution
%\normalsize \href{mailto:jane@smith.com}{jane@smith.com} % Second author's email address
}
\date{} % Leave empty to omit a date
\begin{document}

% Print the title
\maketitle

%----------------------------------------------------------------------------------------
%	ARTICLE CONTENTS
%----------------------------------------------------------------------------------------

\section{Introduction}
\lettrine[nindent=0em,lines=3]{C}ompressed sensing (CS) has recently emerged as a revolutionary sampling theory. This new theory is based on the empirical observation that various important classes of signals, such as audio and images, admit (nearly) sparse approximations when expanded with respect to an appropriate basis or frame, such as a wavelet basis or a Gabor frame. CS theory shows that one can recover such signals from only a few  linear, non-adaptive measurements. As such, CS provides a dimension reduction paradigm. However,  in today’s digitally driven world, every sampling theory needs to be accompanied by a quantization theory.  Next, we discuss this aspect of CS. 

Formally, a signal is a vector $x$ in $\mathbb{R}^n$, where $n$ is potentially large. We say that $x$ is $k$-sparse if $\|x\|_0\le k$ where $\|x\|_0$ is the cardinality of the support of $x=[x_1,\dots,x_n]^T$ defined as $\text{supp}(x):=\{j:\ x_j\ne 0\}$. The set of all $k$-sparse signals in $\mathbb{R}^n$ is denoted by $\Sigma_k^n$. 

Suppose $x\in \Sigma_k^n$ or it is {\it compressible}, i.e., it can be well approximated in $\Sigma_k^n$ such that $\sigma_k(x):=\min_{v\in\Sigma_k^n} \|x-v\|_1$ is small. Compressed measurements of $x$ are linear, non-adaptive measurements given by $y=\Phi x + \eta$. Here $\Phi$ is an $m\times n$ CS measurement matrix with $m\ll n$ and $\eta$ is additive noise. Consequently, the ``compressed" measurement vector $y$ is still real valued, this time in $\mathbb{R}^m$, with $m \ll n$.  As mentioned earlier, in the classical signal processing paradigm, such an acquisition or {\it sampling} stage is followed by {\it quantization} where the sample values are mapped from the continuum to a finite set. While quantization was mostly omitted in the early CS literature, there has been several recent papers that address this problem. The approaches in the literature focus mostly on either ``memoryless scalar quantizers'' (MSQ) or ``noise-shaping quantizers''. 

\subsection{Memoryless scalar quantization for CS} Suppose that $x\in \R^n$ and $y\in \R^m$ are as above. An MSQ with alphabet $\mathcal{A}$ rounds off each entry of $y$ (independently) to the closest element of $\mathcal{A}$ \cite{powell, krahmer3, chou}.  A special case of MSQ is the 1-bit quantizers, where  each measurement is replaced by its sign \cite{boufounos, plan, plan2, rbaraniuk}, i.e., $\mathcal{A}=\{\pm 1\}$. %While such quantizers typically lose information regarding the magnitude of $x$, they are still interesting in certain applications due to their simplicity. 

One way to analyze the error associated with MSQ is by interpreting the quantization error as additive noise. Such an approach shows that one can obtain an approximation $\tilde{x}$ using, for example, Basis Pursuit Denoise \cite{donoho1, tao}. In that case, we get an approximation error bound $\|x-\tilde{x}\|$ that is proportional to the quantizer resolution, say $\delta$. This theoretical upper bound as well as the empirical performance --see \cite{gunturk}-- does not improve by increasing the number of measurements $m$.  On the other hand, it was observed  in \cite{gunturk} that in  a two-stage recovery method where the Penrose-Moore pseudo-inverse is used in the second stage (after support recovery), the error $\|x-\tilde{x}\|$ is empirically $\mathcal{O}(\frac{1}{\sqrt{m}})$. Motivated by this, \cite{kate} shows that $\|x-\tilde{x}\|$ is bounded by the sum of two terms: one that is independent on $m$ but unobservably small in any realistic setting, and another that is indeed $\mathcal{O}(\frac{1}{\sqrt{m}})$, at least for a wide class of sub-Gaussian matrices with high probability. Similarly, it was also shown in the 1-bit CS context in \cite{plan} that for a fixed level of sparsity, the error in approximation using a specific convex minimization program decays as $\mathcal{O}(\frac{1}{m^{1/5}})$ up to a logarithmic factor. 

While these improved results show some decay as a function of $m$, this decay is mild, suggesting that MSQ does not utilize extra measurements efficiently. This leads us to noise-shaping quantizers. 

\subsection{Noise-shaping quantizers for CS} Noise-shaping quantizers were originally introduced in the context of analogue-to-digital (A/D) conversion of bandlimited signals \cite{inose}. These A/D convertors, called $\Sigma \Delta$ quantizers, became popular \cite{schreier} as they can be implemented using low-accuracy circuit elements and still produce high-accuracy approximations by oversampling. For many classes of signals it is much easier to oversample  on circuitry compared to using high-accuracy circuit elements, for example scalar quantizers $Q_{\delta}$ with very small $\delta$. 

Motivated by their efficiency in exploiting redundancy, $\Sigma \Delta$ quantizers were considered in the context of frame expansions (which are inherently redundant). Indeed, they were shown to yield approximations that improve as the redundancy increases in the contexts of Gabor frames \cite{oy1, oy2}, finite frames in $\mathbb{R}^d$  with certain regularity assumptions \cite{bpy1,  bpy2, bodmann}, Gaussian random frames \cite{gunturk}, and sub-Gaussian random frames \cite{krahmer-yilmaz, kst}.

These results in frame theory were instrumental in early work that proposed $\Sigma \Delta$ quantization in the setting of CS. In a nutshell, suppose $x \in \Sigma_{k}^n$, $\Phi \in \mathbb{R}^{m \times n}$ be an appropriate CS measurement matrix, and $y=\Phi x$ be the noise free compressive measurements. Also, let $q$ be obtained by quantizing $y$ using an $r$th order $\Sigma \Delta$ scheme and let $D$ be the difference matrix as in \cite[Section 2.1]{rongrong}.  In \cite{gunturk} a {\it two-stage recovery algorithm} was proposed: first, the support set $T=\mbox{supp}(x)$ is recovered or estimated. Then, the reconstruction vector $\hat{x}$ is given by $\hat{x}_{\Sigma \Delta}=Fq$ with $F=(D^{-r} \Phi_T)^{\dagger} D^{-r}$, where $\Phi_T$ denotes the restriction of $\Phi$ to its columns indexed by $T$. % Under such assumptions, the error in approximation satisfies 
% $$\|x-\hat{x}_{\Sigma \Delta} \|_2 \leq \frac{ C \delta \sqrt{m}}{ \sigma_{min} (D^{-r} \Phi_T)}. $$
While this two-stage reconstruction approach yields superior decay in approximation error as the number of measurements $m$ increases --see \cite{gunturk,krahmer-yilmaz} -- there are two major caveats: The two-stage approach is not robust with respect to additive noise, and it imposes size requirements on the smallest non-zero entry of the sparse signal. 

\section{One-stage recovery for $\sd$-quantized CS}  
As a remedy to the issues mentioned above, \cite{rongrong} proposed a {\it one-stage reconstruction method } which computes the approximation $\tilde{x}$ to $x$ by solving the convex optimization problem

% as saw in Chapter \ref{Quantization},  Saab et al. suggested the following algorithm in \cite{rongrong}.

\begin{align} (\hat{x},\hat{\nu})  := \arg \min_{(z,\nu)} \|z\|_1 & \mbox{ s.t. }   \|D^{-r} (\Phi z +\nu-q) \|_2 \leq C_r \delta \sqrt{m},\notag \\ 
& \mbox{ and } \| \nu \|_2 \leq \epsilon \sqrt{m}. \label{onestagesol}
\end{align} 
Fix, now, any $\ell$ that is sufficiently large so that $\ell$ measurements suffice to recover $x$ from $\Phi x$ in the non-quantized CS setting. Then the approximation $\hat{x}$ obtained as above satisfies 
\begin{equation} \| \hat{x} - x \|_2  \leq C \Big( (\frac{m}{\ell})^{-r+1/2} +\sqrt{ \frac{m}{\ell}} \epsilon \Big) \end{equation} 
where $c$, $C$ are constants that do not depend on $m$, $\ell$, $n$.

Indeed, this method solves the issues mentioned in the previous section when the CS measurements are obtained via sub-Gaussian matrices and certain Fourier matrices \cite{wang2}.  On the other hand, it is not known if this one-stage recovery method enjoys recovery guarantees when we use other important classes of measurement matrices, e.g., random restrictions of discrete Fourier transform matrices (DFT), or various classes of deterministic measurement matrices. 

\subsection{Generalizing to other measurement systems}\label{genA}
% In what follows, we propose ways of generalizing the one-stage reconstruction error guarantees in a way that allows the use of various classes of measurement matrices including restrictions of BOS as well as the so-called chirp sensing matrices, a deterministic class of CS measurement matrices. As opposed to \cite{rongrongfourier}, where the usage of $\Sigma \Delta$ quantization with partial Fourier matrices whose entries in each row have been randomly permuted has been proposed for the \textit{first order} quantizers (and the $r$th order is given as a Conjecture), our proposed tool can be applied to the original partial Fourier matrices, and with any order of quantization.

In order to generalize the results of \cite{rongrong} to other classes of random matrices and also certain deterministic matrices, we isolate one main property, which we call \textit{(P1)}, that the measurement matrices must satisfy for such a generalization. 
\smallskip

%\hspace{-0.65cm} (\textit{P1}) The RIP constant of $\frac{1}{\sqrt{m}} \Phi$ satisfies $\delta_{2k} < 1/9$. \\
\noindent \textbf{Property (P1).} Suppose that $\Phi$ is an $m \times n$ unnormalized CS measurement matrix, with (expected) column norm of $\sqrt{m}$. We say that $\Phi$ satisfies the property (P1) of order $(k,\ell)$ if the RIP constant of $\frac{1}{\sqrt{\ell}} (\Phi)_{\ell}$---where $(\Phi)_{\ell}$ is the restriction of $\Phi$ to its first $\ell$ rows---satisfies $\delta_{2k} < 1/9$. 
\smallskip

Note that sub-Gaussian matrices, and random restrictions of the DFT matrix satisfy this property with high probability for appropriate choices of $k$ and $\ell$ (see Section \ref{twonovel}). % It can be also shown that if $\Phi$ is a measurement matrix satisfying this property, then \textit{(P1)} may not hold for any non-trivial choice of $k, \ell$. A such example is constructed in this paper using the so-called chirp sensing matrices. 

Let $y=\Phi x +\eta$, and $\| \eta\|_{\infty} \leq \epsilon$. Set $H:=[C_r D^r \ \frac{\epsilon}{\delta}I]$. Here $C_r$ is a constant that can depend on the order $r$ and in the specific case of an $r$th order greedy $\Sigma \Delta$ quantizer, $C_r=1/2$ \cite{rongrong}. Next, let $H=U\Sigma V^T$ be the singular value decomposition of $H$. With this notation, the approach used in \cite{rongrong} is to show that $U^T \Phi$ satisfies \textit{(P1)}. It is well-known that sub-Gaussian matrices satisfy (P1) and this is leveraged in \cite{rongrong} to show that $U^T \Phi$ satisfies \textit{(P1)} as well. Yet, this implication is non-trivial and not necessarily true, for example, when $\Phi$ is a partial DFT matrix. 

Here, we propose two ways to circumvent this issue. Specifically, we will devise two novel approaches where it will be sufficient that $\Phi$ (instead of $U^T \Phi$) satisfies \textit{(P1)}. 

\section{Two novel approaches} \label{twonovel}

\subsection{ \label{modified2} Approach 1: Using a modified measurement matrix}  It can be shown (similar to the proof of Theorem 1 in \cite{rongrong}) that one-stage reconstruction following $\Sigma \Delta$ quantization can be performed if 
\begin{enumerate}
\item $\Phi$ satisfies \textit{(P1)}, and 
\item measurements are obtained using $U \Phi$ as opposed to $\Phi$. 
\end{enumerate}
In particular,  under this condition, the reconstruction error is as in \cite[Theorem 1]{rongrong}. Specifically, the following holds.

\begin{Theor} \label{recoveryp1} Suppose that $\Phi$ is an $m \times n$ CS matrix, $x \in \R^n$, and $k<\ell\le m$ is such that $\Phi$ satisfies (P1) of order $(k,\ell)$. Suppose the measurements of $x$ are given by $y=\tilde{\Phi} x$, where $\tilde{\Phi}=U \Phi$ with $U$ as above, and  quantized by an $r$th-order $\sd$ scheme. Then $\hat{x}$, obtained via \eqref{onestagesol} after replacing $\Phi$ with $\tilde{\Phi}$ satisfies 
\begin{equation} \label{lm6} 
\|x-\hat{x}\|_2 \leq C \Big( (\frac{m}{\ell})^{-r+1/2}\delta +\frac{\sigma_k(x)}{\sqrt{k}} + \sqrt{\frac{m}{\ell}}\epsilon\Big) \end{equation}
where $C$ is a constant that does not depend on $m,\ell,n$.  
\end{Theor}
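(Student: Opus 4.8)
The plan is to follow the proof of Theorem 1 in \cite{rongrong} line by line, tracking the single place where that argument invokes the RIP of $U^{T}\Phi$, and to check that when the measurements are taken with $\tilde{\Phi}=U\Phi$ this step only needs the RIP of $\Phi$ itself. Let $q$ be the output of a stable $r$th-order $\sd$ scheme applied to $\tilde{\Phi}x+\eta$, where $\eta$ is the additive measurement noise with $\|\eta\|_{\infty}\le\epsilon$. The state equation produces $u$ with $\|u\|_{\infty}\le C_{r}\delta$ and $\tilde{\Phi}x+\eta-q=D^{r}u$, so $\|D^{-r}(\tilde{\Phi}x+\eta-q)\|_{2}=\|u\|_{2}\le C_{r}\delta\sqrt{m}$ and $\|\eta\|_{2}\le\epsilon\sqrt{m}$; hence $(x,\eta)$ is feasible for \eqref{onestagesol} after replacing $\Phi$ by $\tilde{\Phi}$, and therefore $\|\hat{x}\|_{1}\le\|x\|_{1}$. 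Taking $T$ to be an index set of the $k$ largest-magnitude entries of $x$ and setting $h:=\hat{x}-x$, this yields the usual cone bound $\|h_{T^{c}}\|_{1}\le\|h_{T}\|_{1}+2\sigma_{k}(x)$.

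Next I would extract the ``tube'' estimate. Subtracting the constraint met by $(\hat{x},\hat{\nu})$ from $D^{-r}(\tilde{\Phi}x+\eta-q)=u$ gives $\|D^{-r}(\tilde{\Phi}h+e)\|_{2}\le 2C_{r}\delta\sqrt{m}$ with $e:=\hat{\nu}-\eta$, $\|e\|_{2}\le 2\epsilon\sqrt{m}$. Since $\tilde{\Phi}=U\Phi$ and $U$ is orthogonal, $D^{-r}(\tilde{\Phi}h+e)=D^{-r}U(\Phi h+U^{T}e)$. Using the SVD $H=U\Sigma V^{T}$ of $H=[\,C_{r}D^{r}\ \ \tfrac{\epsilon}{\delta}I\,]$ and the fact that $HH^{T}=C_{r}^{2}D^{r}(D^{r})^{T}+\tfrac{\epsilon^{2}}{\delta^{2}}I$ shares its eigenvectors with $D^{r}(D^{r})^{T}$, one gets, with $\zeta:=\Phi h+U^{T}e$, the identity $\|D^{-r}U(\Phi h+U^{T}e)\|_{2}^{2}=\sum_{j}\zeta_{j}^{2}/\sigma_{j}(D^{r})^{2}$, the $C_{r}$'s cancelling. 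The ``fine'' spectral property of the difference operator used in \cite{rongrong, krahmer-yilmaz} --- the $\ell$ smallest singular values of $D$ are all $O(\ell/m)$, hence those of $D^{r}$ are $O((\ell/m)^{r})$ --- says that the $\ell$ coordinates $G$ on which $D^{-r}$ acts with largest gain carry weight $\sigma_{j}(D^{r})^{-2}\gtrsim(m/\ell)^{2r}$, so restricting the sum to $G$ forces $\|\zeta_{G}\|_{2}\lesssim(\ell/m)^{r}\delta\sqrt{m}$. With the ordering convention for the SVD used in \cite{rongrong}, $G$ is the set of the first $\ell$ coordinates, so $\zeta_{G}=(\Phi)_{\ell}h+(U^{T}e)_{G}$, and since $\|U^{T}e\|_{2}=\|e\|_{2}\le 2\epsilon\sqrt{m}$ this gives
\[ \|(\Phi)_{\ell}h\|_{2}\ \lesssim\ \delta\,\ell^{r}m^{-r+1/2}+\epsilon\sqrt{m}. \]
This is the crux and the only genuinely new point: after the cancellation $U^{T}\tilde{\Phi}=\Phi$, the matrix whose first $\ell$ rows must be well conditioned is $\Phi$, not $U^{T}\Phi$ --- which is exactly hypothesis (P1).

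Finally I would invoke standard RIP-based robust $\ell_{1}$ recovery. Hypothesis (P1) says $\tfrac{1}{\sqrt{\ell}}(\Phi)_{\ell}$ has $\delta_{2k}<1/9$, which is comfortably inside the range for which the cone bound $\|h_{T^{c}}\|_{1}\le\|h_{T}\|_{1}+2\sigma_{k}(x)$ together with control of $\|(\Phi)_{\ell}h\|_{2}$ implies $\|h\|_{2}\lesssim\tfrac{1}{\sqrt{\ell}}\|(\Phi)_{\ell}h\|_{2}+\tfrac{\sigma_{k}(x)}{\sqrt{k}}$. Combining with the previous display,
\[ \|x-\hat{x}\|_{2}\ \lesssim\ \tfrac{1}{\sqrt{\ell}}\big(\delta\,\ell^{r}m^{-r+1/2}+\epsilon\sqrt{m}\big)+\tfrac{\sigma_{k}(x)}{\sqrt{k}}\ =\ C\Big((\tfrac{m}{\ell})^{-r+1/2}\delta+\tfrac{\sigma_{k}(x)}{\sqrt{k}}+\sqrt{\tfrac{m}{\ell}}\,\epsilon\Big), \]
which is \eqref{lm6}. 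I expect the main obstacle to be the middle step: transporting the fine spectral estimates for $D^{-r}$ through the SVD of $H$ and bookkeeping the coordinate ordering so that the conditioning requirement lands precisely on $(\Phi)_{\ell}$ as in (P1). This is essentially the technical heart of \cite{rongrong}, re-routed so that the hypothesis is imposed on $\Phi$ rather than on $U^{T}\Phi$, which is what allows partial bounded orthonormal systems and chirp sensing matrices to be used.
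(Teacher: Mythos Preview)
Your proposal is correct and tracks exactly the route the paper intends. The paper does not spell out a proof of Theorem~\ref{recoveryp1} but says it follows the proof of \cite[Theorem~1]{rongrong}; the detailed proof of Theorem~\ref{withmsq} later in the paper carries out precisely the computation you sketch, using $H^{\dagger}=V\Sigma^{-1}U^{T}$ so that $H^{\dagger}U\Phi=V\Sigma^{-1}\Phi$ and then lower-bounding by $\sigma_{\ell}(H^{\dagger})\|\Phi_{\ell}(x-\hat{x})\|_{2}$, which is exactly your observation that the cancellation $U^{T}\tilde{\Phi}=\Phi$ makes the RIP requirement land on $(\Phi)_{\ell}$ rather than on $(U^{T}\Phi)_{\ell}$. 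Your version separates the $D^{-r}$ and $\epsilon$ contributions (working with the singular values of $D^{r}$ directly and bounding $\|(U^{T}e)_{G}\|_{2}\le\|e\|_{2}$), whereas the paper packages both into $H^{\dagger}$ and appeals to the singular-value bound \cite[(22)]{rongrong}; the two bookkeeping choices are equivalent and give the same final estimate.
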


\subsection*{Implications for bounded orthonormal systems:} The initial matrices used in CS were all non-structured random matrices such as sub-Gaussian matrices. Using them came with at least two important caveats, namely, multiplying non-structured matrices with vectors is a long process and also storing them is costly and difficult. For these reasons, an important class of random matrices in CS are considered choosing random rows of Fourier matrices. Since these random matrices are structured, they solve the issues mentioned above. Another reason for using these matrices is that in some applications such as MRI \cite{mri} or tomographic imaging \cite{tomographic} the devices are designed in a way that they measure the coefficients of signals in the transform domain.   Using these matrices was first suggested by Cand\`es et al. \cite{candes2006}  to recover sparse signals using few measurements. The number of measurements was later improved by Rudelson et al. \cite{rudelson}. Specifically, it is shown in \cite{rudelson} that for a normalized $n \times n$ discrete Fourier transform (DFT) matrix  $\mathcal{F}^{(n)}$ whose $(k,j)$th entry is given by \begin{equation} \label{fourierdef} \mathcal{F}^{(n)}_{k,j}= \frac{1}{\sqrt{n}}e^{\frac{2 \pi i (j-1)(k-1)}{n}} \end{equation}  If the number of measurements $m$ satisfies $m= \mathcal{O}(k \log^4 n)$ , then the submatrix $\Phi$ consisting of $m$ rows of $\mathcal{F}^{(n)}$ satisfies RIP condition with high probability. 

%\hspace{-0.65cm} More generally, Haviv et al. \cite{haviv} showed that this algorithm works not only for Fourier matrices but for every unitary matrix as long as the magnitude of entries of the matrix satisfies certain conditions :

%\begin{theor} \label{fourier} (Theorem 1.1 in \cite{haviv}): Let $M \in \mathbb{C}^{n \times n}$ be a unitary matrix with entries of absolute value $\mathcal{O}(\frac{1}{\sqrt{n}})$, and let $\epsilon >0$ be a fixed constant. For some $q=\mathcal{O}(k \cdot \log^2 k \cdot \log n)$, let $A \in \mathbb{C}^{q \times n}$ be a matrix whose $q$ rows are chosen uniformly and independently from the rows of $M$, multiplied by $\sqrt{\frac{n}{q}}$. Then, with high probability, the matrix $A$ satisfies the restricted isometry property of order $k$ with constant $\epsilon$. \end{theor} 

 In this paper, we use a generalization of Fourier matrices, called Bounded Orthonormal Systems (BOS), as defined in \cite{foucart}.  If $U$ is a discrete BOS, by choosing $m$ random rows of $\sqrt{n}U$, one can obtain the random matrix $A=\sqrt{n}R_T U$ where $R_T: \mathbb{C}^n \to \mathbb{C}^m$ is the random operator that samples $m$ rows of $U$.  According to the following theorem, after proper normalization, such matrix $A$ satisfies RIP with high probability if the number of measurements is large enough and thus it can be used as a CS measurement matrix. 

\begin{Theor}  \label{fourier} \cite{foucart2} Let $A \in \mathbb{C}^{m \times n}$ be the random sampling matrix associated with a BOS with constant $K \geq 1$. If for $\delta \in (0,1)$, $$m \geq C K^2 \delta^{-2} k \ln^4 (n)$$ (for a universal constant $C>0$), then with probability at least $1-n^{-\ln^3 n}$ the restricted isometry constant $\delta_k$ of $\frac{1}{\sqrt{m}} A$ satisfies $\delta_k \leq \delta$.
\end{Theor}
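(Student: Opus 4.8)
This is the by-now-classical restricted-isometry estimate for randomly subsampled bounded orthonormal systems, and the plan is to follow the chaining argument carried out in \cite{foucart2}. The first step is to rewrite the RIP constant as the supremum of a centred empirical process: writing $D_{k,n}$ for the set of unit-norm $k$-sparse vectors in $\mathbb{C}^n$ and $\tilde a_1,\dots,\tilde a_m$ for the i.i.d.\ rows of $A$ (each a uniformly random row of $\sqrt n U$, so $\|\tilde a_\ell\|_\infty\le K$), one has
\[ \delta_k\Big(\tfrac{1}{\sqrt m}A\Big) \;=\; \sup_{x\in D_{k,n}} \Big| \tfrac{1}{m}\sum_{\ell=1}^m \big( |\langle \tilde a_\ell, x\rangle|^2 - \|x\|_2^2\big) \Big|, \]
and the orthonormality of the system gives $\mathbb{E}|\langle\tilde a_\ell,x\rangle|^2=\|x\|_2^2$, so each summand is mean-zero.

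Next I would symmetrize, introducing Rademacher variables $\varepsilon_\ell$ to reduce (up to a factor $2$) to bounding $\mathbb{E}\sup_{x\in D_{k,n}}\big|\tfrac1m\sum_\ell\varepsilon_\ell|\langle\tilde a_\ell,x\rangle|^2\big|$. Conditionally on the sampling points this is a Rademacher chaos that is subgaussian with respect to the random pseudometric $d(x,y)=\tfrac1m\big(\sum_\ell(|\langle\tilde a_\ell,x\rangle|^2-|\langle\tilde a_\ell,y\rangle|^2)^2\big)^{1/2}$, so Dudley's entropy integral (equivalently, the $\gamma_2$/generic-chaining bound) controls the supremum by $\int_0^\infty\sqrt{\log N(D_{k,n},d,u)}\,du$.

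The technical heart is estimating the covering numbers $N(D_{k,n},d,u)$ at two scales. For coarse scales one uses an empirical-method (Maurey) argument, exploiting that $k$-sparse unit vectors lie in $\sqrt k$ times the $\ell_1$-ball and that the entries of $A$ are bounded by $K$; for fine scales one uses the volumetric bound on the union of $\binom nk$ balls of dimension $k$. Substituting these into the entropy integral---and bounding the random quantities that appear (they essentially reduce to $1+\delta_k$) by bootstrapping the process estimate itself---yields $\mathbb{E}\,\delta_k\lesssim\sqrt{K^2k\ln^4 n/m}+K^2k\ln^4 n/m$, so the hypothesis $m\ge CK^2\delta^{-2}k\ln^4 n$ forces $\mathbb{E}\,\delta_k\le\delta/2$. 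I expect obtaining the correct power $\ln^4 n$, rather than a cruder logarithmic factor, to be the main obstacle here: it requires carefully balancing the two covering-number regimes across all scales of the Dudley integral.

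Finally I would upgrade the in-expectation bound to the stated high-probability bound via a deviation inequality for suprema of empirical processes. Because $|\langle\tilde a_\ell,x\rangle|^2$ is not uniformly bounded, the bounded-differences / Talagrand route does not apply directly; instead one truncates or invokes a moment (Bernstein-type) version of the concentration inequality, which produces the tail $1-n^{-\ln^3 n}$. This unbounded-process concentration is the second delicate point; since both it and the chaining estimate are standard but intricate, we simply invoke \cite{foucart2}.
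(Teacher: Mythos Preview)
The paper does not prove this theorem at all: it is simply quoted from \cite{foucart2} with no accompanying argument, and is used only as a black box to justify that partial BOS matrices satisfy property (P1). Your proposal correctly identifies this and ultimately does the same thing---invokes \cite{foucart2}---so it is consistent with the paper; the chaining sketch you give is a faithful outline of the Rudelson--Vershynin/Foucart--Rauhut argument behind the cited result, but it goes well beyond what the paper itself supplies.
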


\begin{cor} \label{ufourier} For a $k$-sparse signal $x \in \mathbb{R}^n$, we can use a Fourier matrix $\mathcal{F}^{(n)}$, Discrete Cosine Transform matrix $\mathcal{C}^{(n)}$, or Discrete Sine Transform $\mathcal{S}^{(n)}$ and consider $m_0$ to be the smallest value (obtained by Theorem \ref{fourier}) for which the corresponding measurement matrix satisfies RIP with $\delta_{2k}<  1/9$ with high probability. Next, set $\ell:=m_0$, and choose $m \geq \ell$ rows of $\mathcal{F}^{(n)}$, $\mathcal{C}^{(n)}$, $\mathcal{S}^{(n)}$ randomly and denote them by $\mathcal{F}^{(m,n)}$, $\mathcal{C}^{(m,n)}$, and $\mathcal{S}^{(m,n)}$ respectively. Then, measure $x$ using $U \mathcal{F}^{(m,n)}$ , $U \mathcal{C}^{(m,n)}$, or $U \mathcal{S}^{(m,n)}$. Let  $\hat{x}$ be the solution to $\eqref{onestagesol}$ with $\Phi$ replaced by one of the matrices mentioned here.   Then, the error in approximation using one-stage $\Sigma \Delta$ quantization satisfies \eqref{lm6} as we increase the number of measurements $m$. \end{cor}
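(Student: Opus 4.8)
The plan is to verify that each of the matrices $\mathcal{F}^{(m,n)}$, $\mathcal{C}^{(m,n)}$, $\mathcal{S}^{(m,n)}$ satisfies Property (P1) of order $(k,\ell)$, and then invoke Theorem \ref{recoveryp1} verbatim. First I would recall that the DFT, DCT, and DST matrices are, after the normalization $\sqrt{n}\,\mathcal{F}^{(n)}$ etc., bounded orthonormal systems with a constant $K\geq 1$ that is absolute (for the DFT, $K=1$; for the cosine/sine transforms the entries are bounded by a fixed multiple of $n^{-1/2}$, so $\sqrt{n}$ times a row has entries bounded by that constant). Consequently Theorem \ref{fourier} applies: there is a universal $C>0$ such that whenever $m_0 \geq C K^2 (1/9)^{-2} k \ln^4 n = C' k \ln^4 n$, the matrix obtained from $m_0$ randomly chosen rows has restricted isometry constant $\delta_{2k} < 1/9$ with probability at least $1 - n^{-\ln^3 n}$. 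This is exactly the quantity $m_0$ named in the statement, and we set $\ell := m_0$.

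Next I would observe the key structural point: Property (P1) only concerns the first $\ell$ rows of the (unnormalized) measurement matrix. When we draw $m \geq \ell$ rows of $\mathcal{F}^{(n)}$ at random to form $\mathcal{F}^{(m,n)}$, the first $\ell$ of these rows are themselves $\ell$ rows of $\mathcal{F}^{(n)}$ drawn uniformly at random — i.e., exactly the situation covered by Theorem \ref{fourier} with $k$ replaced appropriately and with the role of $m$ played by $\ell = m_0$. Hence $\tfrac{1}{\sqrt{\ell}}(\mathcal{F}^{(m,n)})_\ell$ has $\delta_{2k} < 1/9$ with high probability, which is precisely Property (P1) of order $(k,\ell)$. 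The (expected) column-norm normalization required in the statement of (P1) is met because each entry of $\sqrt{n}\,\mathcal{F}^{(n)}$ has modulus $1$, so a column of an $m$-row submatrix has norm exactly $\sqrt{m}$; the same holds up to the fixed BOS constant for the cosine and sine transforms, which is harmless.

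Finally I would apply Theorem \ref{recoveryp1} with $\Phi$ equal to $\mathcal{F}^{(m,n)}$ (resp.\ $\mathcal{C}^{(m,n)}$, $\mathcal{S}^{(m,n)}$): since $\Phi$ satisfies (P1) of order $(k,\ell)$, measuring $x$ via $\tilde\Phi = U\Phi$ (with $U$ the orthogonal factor from the SVD of $H = [C_r D^r\ \tfrac{\epsilon}{\delta} I]$) and reconstructing via \eqref{onestagesol} yields $\hat{x}$ obeying the bound \eqref{lm6}. Since $\ell = m_0$ is a fixed number depending only on $k$ and $n$ (not on $m$), letting $m \to \infty$ makes the term $(m/\ell)^{-r+1/2}\delta$ decay polynomially, which is the claimed behavior. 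The only mild subtlety — and the one place that needs a sentence of care rather than a mechanical citation — is the noise-free versus noisy setting: Corollary \ref{ufourier} states the measurements as $U\mathcal{F}^{(m,n)}x$ with no explicit $\eta$, so one takes $\epsilon = 0$ (or, if robustness is wanted, $\|\eta\|_\infty \leq \epsilon$ and the last term in \eqref{lm6} is retained); in either case the conclusion follows directly from Theorem \ref{recoveryp1} once (P1) is established, so there is no real obstacle beyond bookkeeping the BOS constants.
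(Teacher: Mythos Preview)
Your proposal is correct and matches the paper's intent: the corollary is stated without proof in the paper precisely because it is meant to follow immediately by checking that Theorem~\ref{fourier} gives Property~(P1) of order $(k,\ell)$ for the partial DFT/DCT/DST matrices with $\ell=m_0$, and then invoking Theorem~\ref{recoveryp1}. Your observation that the first $\ell$ of $m$ randomly chosen rows are themselves $\ell$ randomly chosen rows is exactly the point, and your handling of the normalization and the noise term is fine.
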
 

\begin{rk} Here, we show that computing the signal with $U \mathcal{F}^{(m,n)}$, $U \mathcal{C}^{(m,n)}$, or $U  \mathcal{S}^{(m,n)}$ is a fast process at least when $r=1$. First, note that an explicit formula for entries of $U$ is given in the case of $r=1$ in \cite{krahmer} : $$\begin{aligned} U_{k \ell} & =\sqrt{\frac{2}{n+1/2}} \cos \Big( \frac{2(k-1/2)(n-\ell+1/2) \pi}{2n+1} \Big)\\ & =  \sqrt{\frac{2}{n+1/2}} \cos \Big( \frac{(2k-1)(2n-2\ell+1) \pi/2}{2n+1} \Big) \\ & = \sqrt{\frac{2}{n+1/2}} \cos \Big( (2k-1) \pi/2-\frac{(2k-1)\ell \pi}{2n+1} \Big) \\& =  \sqrt{\frac{2}{n+1/2}} (-1)^{k+1} \sin \Big( \frac{(2k-1) \ell \pi}{2n+1} \Big) \end{aligned} $$  On the other hand, Discrete Sine Transform (DST) of type III is given by \cite{nikara}  \begin{equation} \label{dstdef} \mathcal{S}^{(n)}_{k \ell}= \sqrt{\frac{2}{n}} \sin \Big( \frac{ (2k-1) \ell \pi}{2n} \Big) \end{equation} Therefore, we can obtain entries of $U$ using a submatrix of $\mathcal{S}^{(2n+1)}$. The reason is that we can write $(k,2 \ell)th$ element of $\mathcal{S}^{(2n+1)}$ as $$\mathcal{S}^{(2n+1)}_{k,2 \ell}=  \sqrt{\frac{2}{2n+1}} \sin \Big( \frac{ 2(2k-1) \ell \pi}{4n+2} \Big)=  \sqrt{\frac{2}{2n+1}} \sin \Big( \frac{ (2k-1) \ell \pi}{2n+1} \Big)$$ which is same as $(k,\ell)th$ entry of matrix $U$ in absolute value up to a constant. We will also use the expression above for the entries of $U$ in order to to show that evaluating $Uy$ for a vector $y$ is fast. See Remark \ref{aboutu}.    \end{rk} 

\begin{rk} While the singular value decomposition of $D$ can be computed explicitly, to our knowledge an explicit formula for singular value decomposition of  $D^r$ with $r \geq 2$ is not known. Note, however, that one can estimate the singular values of $D^r$ using \textit{Weyl}'s inequalities \cite{gunturk, krahmer}. 
\end{rk}

\begin{rk} \label{issues} Alternatively, one could apply $U$ after collecting the measurements using $\mathcal{F}^{(m,n)}$, $\mathcal{C}^{(m,n)}$, or $\mathcal{S}^{(m,n)}$. Of course, this would require that we keep all $m$ analogue measurements in memory, at least until we apply $U$ still in analogue domains which is not practically feasible in applications when $m$ is large. We will propose a remedy in Section \ref{digitalbuffer}. 
\end{rk}  

\subsection*{Numerical experiments}
\label{nmodified}

In order to verify the results given in Theorem \ref{recoveryp1}, and in particular, given in Corollary \ref{ufourier}, we perform a numerical experiment. In this experiment, we fix the ambient dimension of signals to $n=200$,  the sparsity level to  $k=5$, and the quantization step to $\delta=0.1$. We consider the $m \times 200$ matrix $U \mathcal{F}^{m,200}$ as suggested by Corollary \ref{ufourier} with $m \in \{20, 30, 40, 50, 60, 70\}$ as the measurement matrix. For each value of $m$, we consider 20 signals in $\mathbb{R}^{200}$, random support $T \subseteq \{1,2,...,200\}$, and with non-zero entries chosen from normal Gaussian distribution. For each of these signals, we find the measurement vector, and subsequently quantize it using \textit{first} or \textit{second} order $\Sigma\Delta$ quantization. Next, we find $\hat{x}$, the solution to \eqref{onestagesol}, and we find the error in approximation. We take an average for the error for all 20 signals and move to the next value of $m$. The results are plotted in Figure \ref{modifiedfigure} in log-log scale. As we observe in this Figure, the error bounds decays as predicted in \eqref{lm6}.

\begin{figure}
    
    %\hspace{-1.5cm}
\begin{center}
    \includegraphics[width=0.35\textwidth]{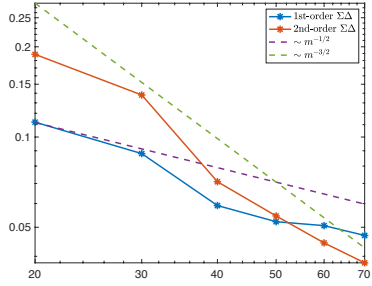}
\end{center}
\caption{  \label{modifiedfigure} Error in approximation  using  first order and second order $\Sigma \Delta$ quantization with one-stage reconstruction scheme and with a ``modified" random partial Fourier matrix  for 10-sparse signals and the comparison with the graphs of $f(m)=\frac{C}{m^{1/2}}$ and $g(m)=\frac{D}{ m^{3/2}}$  in log-log scale.  }

\end{figure}

\subsection{Approach 2: Using a digital buffer}
\label{digitalbuffer}

\begin{figure}
%\hspace{-2.5cm}
\begin{center}
\includegraphics[scale=0.35]{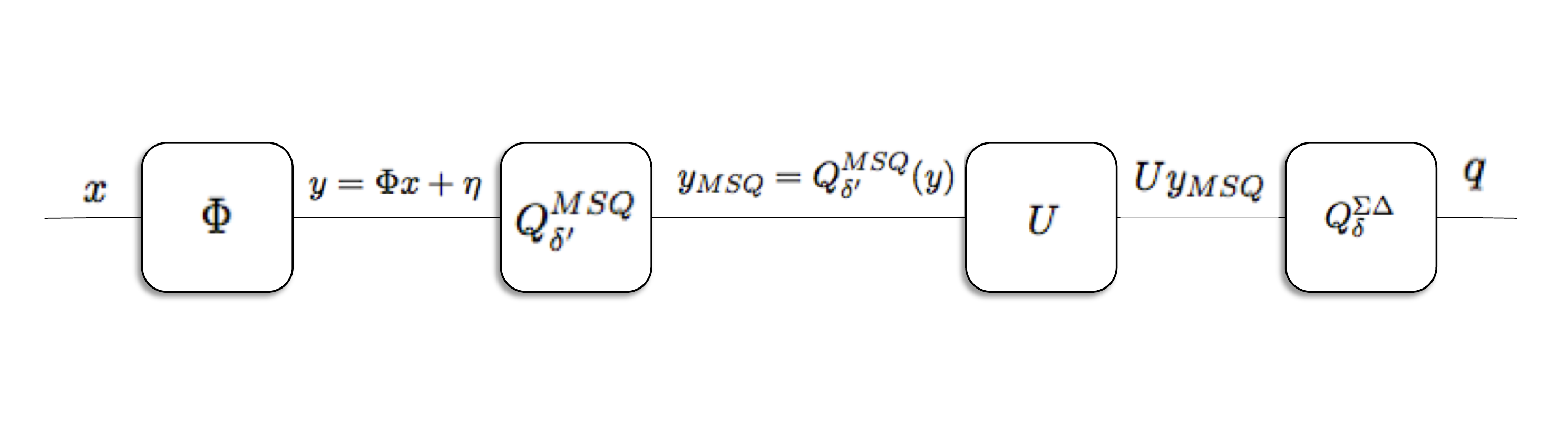}
\caption{Quantizing the signal $x$ by first using MSQ with a very small step size $\delta'$, then applying U which is a fast transform followed by a $\Sigma \Delta$ quantization scheme.}
\end{center}
\end{figure}
Aside from the issues raised in Remark \ref{issues}, the above approach  is not ideal also because the measurement matrix $U \Phi$ (specifically $U$) depends on $m$. This means that we must use a different measurement matrix if we wish to increase the number of measurements $m$, i.e., we can not ``reuse'' the measurements already collected. This problem would be resolved if we could modify the scheme so that

\begin{itemize}

\item We first collect $y=\Phi x$ and quantize $y$; 

\item We then use $U$ (or any other matrix that admits a fast implementation) on the quantized measurements, which are now in the digital domain. 

\end{itemize}

\noindent To that end, we propose the following scheme. 

\begin{enumerate}

\item Given a standard CS measurement matrix $\Phi$, we collect the compressed measurements $y=\Phi x+ \eta$, where $\eta$, as before, denotes the noise such that $\|\eta\|_{\infty} \leq \epsilon$. 

\item We fix a small $\delta'$ (much smaller than the desired final accuracy) and quantize $y$ using an MSQ with step size $\delta'$ resulting in $y_{MSQ}$. This is a high bit-budget representation of $y$ and will be discarded after the next stages so, it is just kept in a buffer (with sufficiently large memory). 

\item We compute $Uy_{MSQ} $, which finely approximates $Uy=U \Phi x$ as $U$ is an isometry. 

\item We use a $\Sigma \Delta$ quantizer (of appropriate order $r$ that matches the matrix $U$ in step (3)) with step size $\delta$ to quantize $Uy_{MSQ}$. This will be the digital representation of $x$ that we will keep. 

\end{enumerate}

\noindent Finally, we will reconstruct an approximation to $x$ by means of convex optimization problem similar to \eqref{onestagesol} given by 

\begin{equation}  \label{onestagesol2} \begin{aligned} (\hat{x},\hat{\nu})  := \arg \min_{(z,\nu)} \|z\|_1 & \mbox{ \ \ \  subject to \ \ \ }  \|D^{-r} (U\Phi z +\nu-q) \|_2 \leq C_r \delta \sqrt{m} \\ & \mbox{ \ \ \ \ and \ \ \ \ \  \ \ \ \ \ } \| \nu \|_2 \leq \delta'' \sqrt{m} \end{aligned} \end{equation}

\noindent with $\delta''$ defined as $\delta''=\epsilon+\delta'/2$.\\

Note that this method will be successful provided $\delta'$ in step (2) is sufficiently small to match the quantization error corresponding to the $\Sigma \Delta$ quantization of step (4). Thus, we will have to ensure that $m \leq m_{max}$ where $\delta'$ will be chosen depending on $m_{max}$ (or vice versa). Collecting all these, we have the following theorem, which we will prove after stating few remarks.  

\begin{Theor} \label{withmsq} Let $x \in \mathbb{R}^n$, $\Phi$ be a CS measurement matrix, $k$ and $\ell$ be such that $\Phi$ satisfies (P1), suppose that $q$ is obtained from $x$ following the scheme suggested above where \\

\begin{itemize}

\item $U$ is tailored to a $\Sigma \Delta$ quantizer of order $r$ (as described in Section \ref{modified2}).

\item $\delta':=\frac{\delta(3\pi r)^r}{m_{max}^r}$

\item $\|\eta\|_{\infty} \leq \epsilon $ 

\end{itemize}

\noindent If $\hat{x}$ is obtained via \eqref{onestagesol2}, the approximation error satisfies 

 \begin{equation} \label{lm66} \|x-\hat{x}\|_2 \leq 2 C_4 \delta (3 \pi r)^r (\frac{m}{\ell})^{-r+1/2} +2C_4 \sqrt{\frac{m}{\ell}} \epsilon+\frac{2C_5}{\sqrt{k}} \sigma_k(x) \end{equation}

\noindent for $\ell \leq m \leq m_{max}$.  Here, $C_4$ and $C_5$ depend only on the RIP constants of $\Phi$. 

\end{Theor}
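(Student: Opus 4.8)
The plan is to reduce Theorem~\ref{withmsq} to Theorem~\ref{recoveryp1} by absorbing the memoryless-scalar-quantization error of step~(2) into a single effective additive-noise term, and then to check that the prescribed value of $\delta'$ keeps the resulting extra error below the $\Sigma\Delta$ floor uniformly over $\ell\le m\le m_{max}$. Concretely, set $\bar\eta := y_{MSQ}-\Phi x$. Since $y=\Phi x+\eta$ with $\|\eta\|_\infty\le\epsilon$ and an MSQ with step $\delta'$ satisfies $\|y_{MSQ}-y\|_\infty\le\delta'/2$, we get $\|\bar\eta\|_\infty\le\epsilon+\delta'/2=\delta''$, hence $\|U\bar\eta\|_2=\|\bar\eta\|_2\le\sqrt m\,\delta''$ because $U$ is an isometry. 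Thus the vector fed into the $\Sigma\Delta$ loop in step~(4) is $U y_{MSQ}=U\Phi x+U\bar\eta$, i.e.\ a noisy compressed measurement of $x$ taken with the modified matrix $\tilde\Phi=U\Phi$ and additive noise $U\bar\eta$ of $\ell_\infty$-size at most $\delta''$ — exactly the setting of Theorem~\ref{recoveryp1} (whose bound \eqref{lm6} already accounts for additive noise $\eta$, $\|\eta\|_\infty\le\epsilon$, through its $\sqrt{m/\ell}\,\epsilon$ term), now with $\delta''$ playing the role of $\epsilon$ and the matrix $H$ of Section~\ref{genA} built from $\delta''/\delta$. Note that $\Phi$ itself — not $U\Phi$ — is the matrix required to satisfy (P1), which is our hypothesis. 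The $\Sigma\Delta$ scheme stays stable on $Uy_{MSQ}$ because it lies within $\|\bar\eta\|_2\le\sqrt m\,\delta''$ of $U\Phi x$, on which it is stable; so the state variable $u$ with $Uy_{MSQ}-q=D^r u$ still satisfies $\|u\|_\infty\le C_r\delta$.

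Next I would verify that $(z,\nu)=(x,U\bar\eta)$ is feasible for \eqref{onestagesol2}: by the state equation, $D^{-r}(U\Phi x+U\bar\eta-q)=D^{-r}(Uy_{MSQ}-q)=u$, so $\|D^{-r}(U\Phi x+U\bar\eta-q)\|_2=\|u\|_2\le\sqrt m\,\|u\|_\infty\le C_r\delta\sqrt m$, and $\|U\bar\eta\|_2\le\delta''\sqrt m$. Hence the minimizer $\hat x$ of \eqref{onestagesol2} satisfies the conclusion of Theorem~\ref{recoveryp1} with $\epsilon$ replaced by $\delta''$ and the constant $(3\pi r)^r$ — which originates from the bound on the pertinent norm of $D^{-r}$ (see \cite{gunturk,krahmer}) — kept explicit rather than absorbed into $C$:
\[
\|x-\hat x\|_2 \le C_4\,\delta(3\pi r)^r\,(m/\ell)^{-r+1/2} + C_4\sqrt{m/\ell}\,\delta'' + \frac{C_5}{\sqrt k}\,\sigma_k(x),
\]
with $C_4,C_5$ depending only on the RIP constants of $\Phi$.

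Finally I would plug in $\delta''=\epsilon+\delta'/2$ and $\delta'=\delta(3\pi r)^r/m_{max}^r$, splitting $C_4\sqrt{m/\ell}\,\delta''$ as $C_4\sqrt{m/\ell}\,\epsilon+C_4\sqrt{m/\ell}\,\delta'/2$. For $\ell\le m\le m_{max}$ one has $\delta'\le\delta(3\pi r)^r/m^r$ and $1/\sqrt\ell\le\ell^{\,r-1/2}$, so
\[
C_4\sqrt{m/\ell}\,\tfrac{\delta'}{2}\;\le\;\tfrac12\,C_4\,\delta(3\pi r)^r\,\ell^{\,r-1/2}\,m^{-r+1/2}\;=\;\tfrac12\,C_4\,\delta(3\pi r)^r\,(m/\ell)^{-r+1/2};
\]
absorbing this into the first term and being generous with the other two yields \eqref{lm66}. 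The one place needing genuine care is precisely this matching of scales: applying $U$ and then decoding amplifies the MSQ error by a $\sqrt m$ factor, and the choice $\delta'=\delta(3\pi r)^r/m_{max}^r$ is exactly what makes this amplified error stay below the $\Sigma\Delta$ floor $\delta(3\pi r)^r(m/\ell)^{-r+1/2}$ simultaneously for every admissible $m$ — confirming this, together with the fact that interposing the MSQ/buffer layer before $U$ does not change the $\Sigma\Delta$ stability constant $C_r$ (hence $C_4$), is what uses the explicit form of the constants; the remainder is a direct appeal to Theorem~\ref{recoveryp1}.
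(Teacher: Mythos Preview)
Your argument is correct and follows the same route as the paper: absorb the MSQ perturbation into an effective noise term (the paper's $\mu=U(\eta+\eta')$ is your $U\bar\eta$), check feasibility of $(x,U\bar\eta)$ in \eqref{onestagesol2}, and then use $m\le m_{max}$ together with $\ell\ge 1$ to fold the $\delta'/2$ contribution into the $\Sigma\Delta$ term --- the only difference is packaging, since the paper redoes the $H^\dagger$/SVD/Lemma~\ref{hatx} computation inline rather than citing Theorem~\ref{recoveryp1} as a black box. One slip to fix: you assert that $U\bar\eta$ has ``$\ell_\infty$-size at most $\delta''$,'' but $U$ does not preserve $\ell_\infty$ bounds; only $\|U\bar\eta\|_2=\|\bar\eta\|_2\le\sqrt m\,\delta''$ holds (as you correctly state a line earlier), and since both the feasibility constraint in \eqref{onestagesol2} and the analysis behind Theorem~\ref{recoveryp1} use only this $\ell_2$ bound, the argument goes through unchanged once the $\ell_\infty$ claim is dropped.
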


\begin{rk} Since in practice, the original measurements in CS are physical quantities (such as currents), the MSQ step mentioned in Theorem \ref{withmsq} was performed in order to assign numbers to the measurements which enables us to store the measurements in the processor and multiply with $U$ later. 
\end{rk}

\begin{rk} \label{aboutu} In step (3) above, we need to compute $Uy_{MSQ}$. Here, we show that this computation can be done fast. To that end, we use the fact that for $\mathcal{S}^{(m)}$ as defined in \eqref{dstdef}, computing $\mathcal{S}^{(m)} y$ is fast for any positive integer $m$, and any vector $y$. Let $y=(y_1,y_2,...,y_m)$, then $$ \begin{aligned} & (Uy)_1=U_{11}y_1+U_{12}y_2+....+U_{1m}y_m \\ 
& (Uy)_2=U_{21}y_1+U_{22}y_2+...+U_{2m} y_m \\ & \ \ \ \ \ \  \vdots \\ & (Uy)_m=U_{m1}y_1+U_{m2}y_2...+U_{mm}y_m \end{aligned} $$
Thus,

$$ \begin{aligned} & (Uy)_1=  \sqrt{\frac{2}{m+1/2}} \Big(     \sin (\frac{\pi}{2m+1})  y_1+\sin (\frac{2 \pi}{2m+1})y_2+....+\sin (\frac{m \pi}{2m+1}) y_m \Big) \\ 
& (Uy)_2= -\sqrt{\frac{2}{m+1/2}} \Big( \sin (\frac{3 \pi}{2m+1})y_1 +\sin(\frac{6 \pi}{2m+1})y_2 +...+\sin(\frac{3m \pi}{2m+1})     y_m \Big) \\ & \ \ \ \ \ \  \vdots \\ & (Uy)_m=  (-1)^{m+1} \sqrt{\frac{2}{m+1/2}} \Big( \sin (\frac{ (2m-1)\pi}{2m+1})y_1 +...+\sin(\frac{m(2m-1) \pi}{2m+1})     y_m \Big)   \end{aligned} $$

\noindent  Thus, we can write the above equations in the following form. $$ \begin{aligned} & (Uy)_1=  \sqrt{2}\mathcal{S}^{(2m+1)}_{12}y_1+\sqrt{2}\mathcal{S}^{(2m+1)}_{14}y_2+...+\sqrt{2}\mathcal{S}^{(2m+1)}_{1,2m}y_m \\ 
& (Uy)_2=  -\sqrt{2}\mathcal{S}^{(2m+1)}_{22}y_1-\sqrt{2}\mathcal{S}^{(2m+1)}_{24}y_2-...-\sqrt{2}\mathcal{S}^{(2m+1)}_{2,2m}y_m     \\ & \ \ \ \ \ \  \vdots \\ & (Uy)_m=  (-1)^{m+1} \sqrt{2}\mathcal{S}^{(2m+1)}_{m2}y_1+(-1)^{m+1} \sqrt{2}\mathcal{S}^{(2m+1)}_{m4}y_2+...+(-1)^{m+1} \sqrt{2}\mathcal{S}^{(2m+1)}_{m,2m}y_m                \end{aligned} $$

\noindent Therefore, we have $$(Uy)_j= (-1)^{j+1} \sqrt{2} ( \mathcal{S}^{(2m+1)} \tilde{y})_j$$ for $j=1,2,...,m$, and where $\tilde{y} \in \mathbb{R}^{2m+1}$ is a vector whose odd entries are zero, and whose ($2k$)th entry ($k=1,2,...,m$) is defined as $y_k$. Accordingly, computing $Uy$ is a fast process.  

\end{rk}

To prove Theorem \ref{withmsq}, we use the following instrumental Lemma.

%\noindent The following results are instrumental.  

\begin{lem} \cite{foucartrobust} \label{hatx} Let $f, g \in \mathbb{C}^n$, and $\Phi \in \mathbb{C}^{m,n}$. Suppose that $\Phi$ is RIP with constant $\delta_{2k} < 1/9$. Then for any $ 1 \leq p \leq 2$, we have $$\|f-g\|_p \leq C_4 k^{1/p-1/2} \| \Phi (f-g) \|_2+\frac{C_5}{k^{1-1/p}} (\|f\|_1-\|g\|_1+2 \sigma_k(g)_{1}) $$ where $C_4$ and $C_5$ are constants that only depend on $\delta_{2k}$. 
\end{lem}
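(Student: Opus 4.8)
Our plan is to carry out the standard analysis of $\ell_1$-minimisation under the restricted isometry property, but for an \emph{arbitrary} pair $f,g$ rather than for the minimiser of some program, so that the data-fidelity surplus is carried through the single scalar $E := \|f\|_1 - \|g\|_1 + 2\sigma_k(g)_1$; this is precisely the generic estimate established in \cite{foucartrobust}. We note at the outset that the general $\ell_p$ bound for $1\le p\le 2$ follows from the $\ell_2$ bound essentially for free, by inserting the elementary comparison $\|z_S\|_p \le |S|^{1/p-1/2}\|z_S\|_2$ only at the very last step; the one genuinely technical ingredient is the restricted-isometry ``core estimate'' discussed below.

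First I would set $v := f-g$ and let $T$ index the $k$ largest-magnitude entries of $g$, so that $\|g_{\overline{T}}\|_1 = \sigma_k(g)_1$. Expanding $\|f\|_1 = \|g_T + v_T\|_1 + \|g_{\overline{T}} + v_{\overline{T}}\|_1$ and applying the reverse triangle inequality in each block gives $\|f\|_1 \ge \|g_T\|_1 - \|v_T\|_1 + \|v_{\overline{T}}\|_1 - \|g_{\overline{T}}\|_1$; rearranging and using $\|g\|_1 = \|g_T\|_1 + \|g_{\overline{T}}\|_1$ yields the cone-type inequality $\|v_{\overline{T}}\|_1 \le \|v_T\|_1 + E$.

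Next I would sort the coordinates of $\overline{T}$ by decreasing magnitude of $v$, partition them into consecutive size-$k$ blocks $T_1,T_2,\dots$, and set $T_0 := T\cup T_1$, so $|T_0|\le 2k$. The standard $\ell_1$-to-$\ell_2$ shifting bound gives $\sum_{j\ge 2}\|v_{T_j}\|_2 \le k^{-1/2}\|v_{\overline{T}}\|_1 \le \|v_T\|_2 + k^{-1/2}E$, where the last step combines the cone inequality with $\|v_T\|_1 \le \sqrt{k}\,\|v_T\|_2$. From here the restricted-isometry core estimate---the standard argument of \cite{foucartrobust} that, using only $\delta_{2k}$, produces a bound $\|v_{T_0}\|_2 \le \tfrac{\alpha}{\sqrt{k}}E + \beta\|\Phi v\|_2$ with $\alpha,\beta$ depending on $\delta_{2k}$ alone---completes the main step, the constants being finite precisely because $\delta_{2k}<1/9$.

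Finally I would assemble the $\ell_p$ bound: for $1\le p\le 2$,
\[ \|v\|_p \le \|v_{T_0}\|_p + \sum_{j\ge 2}\|v_{T_j}\|_p \le k^{1/p-1/2}\big( (2^{1/p-1/2}+1)\,\|v_{T_0}\|_2 + k^{-1/2}E \big), \]
using $\|v_{T_0}\|_p \le (2k)^{1/p-1/2}\|v_{T_0}\|_2$, $\|v_{T_j}\|_p \le k^{1/p-1/2}\|v_{T_j}\|_2$, the shifting bound, and $\|v_T\|_2\le\|v_{T_0}\|_2$. Substituting the core estimate and collecting terms gives $\|v\|_p \le C_4\,k^{1/p-1/2}\|\Phi v\|_2 + \tfrac{C_5}{k^{1-1/p}}E$, with $C_4,C_5$ depending only on $\delta_{2k}$ (uniformly in $p$, as $2^{1/p-1/2}\le\sqrt{2}$); the case $p=2$ is the usual robust-and-stable recovery bound. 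I expect the only delicate point to be the core estimate---extracting usable constants from $\delta_{2k}<1/9$ without appealing to higher-order restricted-isometry constants---and, since the statement is already in this exact generic form, one may alternatively just quote it directly from \cite{foucartrobust}.
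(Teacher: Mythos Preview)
The paper does not prove this lemma at all: it is stated with the citation \cite{foucartrobust} and then used as a black box in the proof of Theorem~\ref{withmsq}. Your closing remark---that one may simply quote the statement directly from \cite{foucartrobust}---is exactly what the paper does.

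Your sketch is nonetheless a correct outline of the standard argument that appears in the cited source: the cone inequality $\|v_{\overline T}\|_1\le\|v_T\|_1+E$ with $E=\|f\|_1-\|g\|_1+2\sigma_k(g)_1$, the block decomposition and $\ell_1$-to-$\ell_2$ shifting bound, the RIP core estimate controlling $\|v_{T_0}\|_2$ in terms of $\|\Phi v\|_2$ and $k^{-1/2}E$, and the final $\ell_p$ assembly via $\|z_S\|_p\le|S|^{1/p-1/2}\|z_S\|_2$. The only point you flag as delicate---extracting finite constants from $\delta_{2k}<1/9$---is indeed handled in \cite{foucartrobust}, and since $1/9$ is comfortably below the usual thresholds (e.g.\ $\sqrt{2}-1$) there is no obstruction. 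So your proposal is sound, just more than the paper itself undertakes.
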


\begin{proof}[Proof of Theorem \ref{withmsq}]
 Let $x \in \mathbb{R}^n$ be the given signal, $y=\Phi x+\eta$ be the measurement vector (as usual), and $\tilde{y} :=y_{MSQ}$ be the vector obtained from $y$ be performing \textit{MSQ} (with the step size $\delta'$ mentioned above). Then, we have  $\tilde{y}=y+\eta'=\Phi x +\eta+\eta'$ with $\|\eta\|_2 \leq \epsilon \sqrt{m}$ and  $\|\eta'\|_2=\|y-\tilde{y}\|_2 \leq \frac{\delta'}{2} \sqrt{m} $. Moreover, since we apply $\Sigma \Delta$ quantization scheme on the vector $U\tilde{y}$ (to obtain the quantized vector $q$), we can write  $$U\tilde{y}-q=D^ru$$ with $\|u\|_2 \leq C_r \delta \sqrt{m}$ \cite{rongrong}. Thus, $$U \Phi x-q=D^ru+\mu=Hp'$$ where $\mu=U(\eta+\eta')$, $H=[C_r D^r \ \frac{\delta''}{\delta}I] $  and $p'=\begin{aligned}
     \begin{bmatrix}
           \frac{1}{C_r} u \\
           \\
           \frac{\delta}{\delta''} \mu \\

        %   \vdots \\
           %x_{m}
         \end{bmatrix}
  \end{aligned} $. Note that $$\|\mu\|_2 \leq \|\eta\|_2+\|\eta'\|_2 \leq (\epsilon+\frac{\delta'}{2})\sqrt{m}=\delta'' \sqrt{m}$$ where $\delta''$ is defined as $\delta''=\epsilon+\frac{\delta'}{2}$. Hence, $\|p'\|_2^2 \leq \delta^2 m + \delta^2 m=2 \delta^2 m$. \\
 
\noindent Therefore, by defining $w:=D^{-r}(\Phi \hat{x}-\hat{\nu}-q)$ where  $\hat{x}$ and $\hat{\nu}$ are the solutions to minimization problem \eqref{onestagesol2} , we have \begin{equation} \begin{aligned} \|H^{\dagger} U \Phi (x-\hat{x})\|_2 & \leq \|H^{\dagger}(U \Phi x-q)\|_2+\|H^{\dagger}(U \Phi \hat{x}-q)\|_2 \\ & =\|H^{\dagger}Hp'\|_2+\|H^{\dagger}(D^rw+\hat{\nu})\|_2 \\ 
 & \leq \|H^{\dagger} H \|_{op} \|p'\|_2+\|H^{\dagger}H\|_{op} \|p''\|_2 \\
  & \leq \delta \sqrt{2m}+ \delta \sqrt{2m}=2 \delta \sqrt{2m}  \end{aligned} \end{equation}

  \noindent where $ p''= \begin{aligned}
     \begin{bmatrix}
           \frac{1}{C_r} w \\
           \\
           \frac{\delta}{\delta''} \hat{\nu} \\

        %   \vdots \\
           %x_{m}
         \end{bmatrix}
  \end{aligned}$  , and $\|p''\|_2 \leq \delta \sqrt{2m}$ (since $\|w\|_2 \leq C_r \delta \sqrt{m}$ and $\|\hat{\nu}\|_2 \leq \delta'' \sqrt{m}$ by \eqref{onestagesol2}). \\
 % \end{proof}
 
\noindent  On the other hand for every $1 \leq \ell \leq m$, 
 
 $$\begin{aligned}  2 \delta \sqrt{2m} \geq \|H^{\dagger}U\Phi (x-\hat{x})\|_2 & =\|V \Sigma^{-1} U^T U \Phi (x-\hat{x}) \|_2 \\ & =\| \Sigma^{-1} \Phi (x-\hat{x})\|_2 \geq \sigma_{\ell} (H^{\dagger}) \|\Phi_{\ell} (x-\hat{x})\|_2 \end{aligned}$$
 
 \noindent where $\sigma_{\ell}(H^{\dagger})$ is the $\ell th$ singular value of $H^{\dagger}$. Hence, $$\|\Phi_{\ell}(x-\hat{x})\|_2 \leq \frac{2 \delta \sqrt{m}}{\sigma_{\ell}(H^{\dagger})} $$
 
 \noindent and the lower bound for $\sigma_{\ell}(H^{\dagger})$ is given in (22) of \cite{rongrong} (with $\epsilon$ replaced by $\delta''$). Now by Lemma \ref{hatx} , if $k$ and $\ell$ are chosen so that $\frac{1}{\sqrt{\ell}} \Phi_{\ell}$ satisfies RIP with $\delta_{2k} <1/9$, then by using $p=2$ we obtain $$\|x-\hat{x}\|_2 \leq \frac{C_4}{\sqrt{\ell}} \|\Phi_{\ell} (x-\hat{x})\|_2+\frac{2C_5}{\sqrt{k}} \sigma_k (x)$$ where we used the fact that $\|\hat{x}\|_1 \leq \|x\|_1$. Hence, for such $\ell$ and $k$ : $$ \begin{aligned} \|x-\hat{x}\|_2 & \leq \frac{C_4}{\sqrt{\ell}} \|\Phi_{\ell} (x-\hat{x})\|_2+\frac{2C_5}{\sqrt{k}} \sigma_{k} (x)  \\
 & \leq \frac{2 C_4 \delta \sqrt{m}}{\sigma_{\ell}(H^{\dagger}) \sqrt{\ell}}+\frac{2C_5}{\sqrt{k}} \sigma_{k} (x) \end{aligned} $$ Now, we use (22) of \cite{rongrong}, with $\frac{\epsilon}{\delta}$ replaced by $\frac{\delta''}{\delta}=\frac{\epsilon+\delta'/2}{\delta}$ to simplify the bound above. $$\begin{aligned} \|x-\hat{x}\|_2 &  \leq 2 C_4 C_r \delta (3 \pi r)^r (\frac{m}{\ell})^{-r+1/2} +2C_4 \sqrt{\frac{m}{\ell}} (\epsilon+\delta'/2)+\frac{2C_5}{\sqrt{k}} \sigma_k(x) \\ & \leq 2C_4 \delta (3 \pi r)^r (\frac{m}{\ell})^{-r+1/2}+2C_4 \sqrt{\frac{m}{\ell}} \epsilon+\frac{2C_5}{\sqrt{k}} \sigma_k(x) \end{aligned} $$ for values of $m$ satisfying $\ell \leq m \leq m_{max}$. In the last inequality above, we used $\ell \geq 1$, and we assumed $C_r=1/2$, and $\delta'=\frac{\delta (3 \pi r)^r}{m_{max}^r}$. 
 
% Lastly, using the inequality above, we observe that in order to have $\|x-\hat{x}\|_2 = \mathcal{O}\Big( (3 \pi r)^r \delta m^{-r+1/2}\Big)$ in case where $x$ is $k$-sparse,  $\ell$ is kept fixed (as a function of $k$), there is no noise, and $m$ varies in the range of $\ell \leq m \leq m_{\max}$, it is enough to choose $\delta' \leq \frac{\delta(3\pi r)^r}{m_{max}^r}$. 

 \end{proof}

\subsection*{Numerical experiments}

In this section, we verify  the result given in Theorem \ref{withmsq} empirically. In order to do that we repeat the experiment explained in Section  \ref{nmodified}. The only difference is that in this experiment, to obtain the measurement vector $y$, we use the original random partial Fourier matrix $\mathcal{F}^{m \times 200}$ (as opposed to $U \mathcal{F}^{m \times 200}$), then we use the step size $\delta'=\frac{(3 \pi r)^r \delta}{m_{\max}^r}$ to obtain the high-budget quantized vector $y_{MSQ}$ (which will be stored in the buffer and will be discarded later). Next, we find $Uy_{MSQ}$ and quantize it using $r$th order ($r=1,2$) $\Sigma \Delta$ quantization (with the step size $\delta=0.1$)  to obtain the vector $q$. Next, we use \eqref{lm6} to obtain the vector $\hat{x}$ and we find the error in approximation. Similar to what we did in Section \ref{nmodified}, we repeat the experiment for 20 signals, and we take an average for the error in approximation. The graph of errors along with the reference graphs $f(m)=\frac{C}{\sqrt{m}}$ and $g(m)=\frac{D}{m \sqrt{m}}$ are shown in Figure \ref{withextradelta} in log-log scale.

%  and in particular, given in Corollary \ref{ufourier}, we perform a numerical experiment. In this experiment, we fix the ambient dimension of signals to $n=200$,  the sparsity level to  $k=10$, and the quantization step to $\delta=0.1$. We consider the $m \times 200$ matrix $U \mathcal{F}^{m,400}$ as suggested by Corollary \ref{ufourier} with $m=40, 60, 80, 100, 120, 140$. as the measurement matrix. For each value of $m$, we consider 20 signals in $\mathbb{R}^{200}$, random support $T \subseteq \{1,2,...,200\}$, and with non-zero entries chosen from normal Gaussian distribution. For each of these signals, we find the measurement vector, and subsequently quantize it using \textit{first} or \textit{second} order $\Sigma\Delta$ quantization. Next, we find $\hat{x}$, the solution to \eqref{onestagesol}, and we find the error in approximation. We take an average for the error for all 20 signals and move to the next value of $m$. The results are plotted in Figure \ref{modifiedfigure} in log-log scale. As we observe in this Figure, the error bounds decays as predicted in \eqref{lm6}. 

\begin{figure}
    
    %\hspace{-1.5cm}
\begin{center}
    \includegraphics[width=0.35\textwidth]{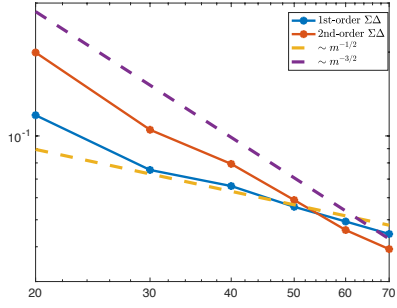}
\end{center}
\caption{ \label{withextradelta} Error in approximation  using  first order and second order $\Sigma \Delta$ quantization with one-stage reconstruction scheme and with an extra MSQ step (before applying  the matrix $U$ on the measurement vector).   }

\end{figure}

\section{One-stage recovery for $\Sigma \Delta$ quantization with deterministic matrices}
\label{chirpcannot}

\textit{Chirp sensing} matrices constitute an important class of deterministic matrices, first introduced by Applebaum et al. \cite{applebaum} in the context of CS. For a prime number $p$ and $\omega:=e^{i\frac{2\pi}{p}}$, the columns of a $p \times p^2$ chirp sensing matrix $\Phi$ are defined via 
\small
\begin{equation} \label{imp} 
\Phi_{rp+m+1}=\left[\omega^{r\cdot{0}^2+m\cdot{0}}, \omega^{r\cdot{1}^2+m\cdot{1}}, \dots, \omega^{r\cdot{(p-1)}^2+m\cdot{(p-1)}} \right]^T
\end{equation}
\normalsize
% \begin{equation} \label{imp} \Phi_{rp+m+1}=\left[ {\begin{array}{cc}
%    \omega^{r\cdot{0}^2+m\cdot{0}} \\
%    \omega^{r\cdot{1}^2+m\cdot{1}} \\
%    \vdots \\
%             \\
%     \omega^{r\cdot{(p-1)}^2+m\cdot{(p-1)}} \\
%   \end{array} } \right] \end{equation}
where $r$ and $m$ range between $0$ and $p-1$. % By this definition, the $(k,j)$th entry of the matrix $\Phi$ is defined as $\Phi_{k,j}=\omega^{rk^2+mk}$ with $rp+m+1=j$.
As in the case of random measurement matrices, it is natural to ask whether $\Sigma \Delta$ schemes can be used to quantize CS measurements obtained using chirp sensing matrices. 

%Unfortunately,  \textit{(P1)} does not hold for these matrices. Next we show that this issue can be resolved by using a certain submatrix of the chirp sensing matrix $\Phi$ defined as follows. 
%

%As we saw before, the chirp sensing matrices are defined as follows. For a prime number $p$ and the primitive root of unity $\omega=e^{\frac{2\pi i}{p}}$, the columns of $p \times p^2$ matrix $\Phi$ are defined via \begin{equation} \label{imp} \Phi_{rp+m+1}=\left[ {\begin{array}{cc}
%   \omega^{r\cdot{0}^2+m\cdot{0}} \\
%   \omega^{r\cdot{1}^2+m\cdot{1}} \\
%   \vdots \\
%            \\
%    \omega^{r\cdot{(p-1)}^2+m\cdot{(p-1)}} \\
%
%  \end{array} } \right] \end{equation}
%
%\noindent In above, $\Phi_j$ denotes the $j$th column of $\Phi$, and $r$ and $m$ range between $0$ and $p-1$. By this definition, the $(k,j)$th entry of the matrix $\Phi$ is defined as $\Phi_{k,j}=\omega^{rk^2+mk}$ with $rp+m+1=j$.
Motivated by the fact that chirp sensing matrices can be used as CS measurement matrices, we try to use $\Sigma \Delta$ schemes to quantize CS measurements obtained using these matrices. We know that we can do so if they satisfy \textit{(P1)}. However, we observe that \textit{(P1)} does not hold for these matrices.  Consider a $p \times p^2$ chirp sensing matrix $ \Phi$ and let   $T=\{1,2\}$ (hence, we shall consider the first and second columns of this matrix). Note that we prefer the parameter $\ell$ in \eqref{lm6} to be as small as possible in order to minimize the error in approximation, but as we illustrate below the property \textit{(P1)} does not hold even for $\ell=p^{1-\epsilon}$ (for any $\epsilon>0$), and large enough $p$. Set $E=\Phi_{\ell}$ (where as above, $\Phi_{\ell}$ denotes the restriction of $\Phi$ to its first $\ell$ rows). Next, consider the matrix $A=\frac{1}{\ell} E_{T}^* E_T$. Obviously, $A_{11}=A_{22}=1$ and for any given $\epsilon'>0$, $$|A_{12}|=|A_{21}|=\Big| \frac{1}{\ell} \Big(1+e^{\frac{ 2 \pi i}{p}} + e^{\frac{ 4 \pi i}{p}} + e^{\frac{6 \pi i}{p}}+...+e^{\frac{ 2 \ell \pi i}{p}} \Big) \Big| \geq 1- \epsilon' $$ for large enough $p$ since each term in the sum above goes to 1 as $p \to \infty$. The eigenvalues of this matrix satisfy $(1-\lambda)^2 = |A_{12}|^2 $ and so $\lambda_{min} = 1-|A_{12}| \leq \epsilon'$. Hence, $\delta_2=\max\{1-\lambda_{min}, \lambda_{max}-1\} \geq 1-\epsilon'$ for large enough $p$, and therefore $\delta_2<1/9$ can not hold. 

 However, this issue can be resolved if we use a certain submatrix of the chirp sensing matrix by choosing certain values of $m$. Specifically, we define a $p \times p \lfloor \sqrt{p} \rfloor$ matrix $\bar{\Phi}$ as follows. 

 \begin{defi} \label{barphi} Let  $p$ be a prime number, and $\Phi=(\omega^{rx^2+mx})_{x \in \mathbb{Z}_p} $ be a $p \times p^2$ chirp sensing matrix, where the columns are indexed by two parameters $r$ and $m$ in $\mathbb{Z}_p$.  Define $\bar{\Phi}=(\omega^{rx^2+mx})_{x \in \mathbb{Z}_p}$ as a $p \times p \lfloor \sqrt{p} \rfloor$  submatrix of $\Phi$ if the values of $r$ and $m$ are chosen from $\{ 0,1,2.,,,p-1\}$, and $\{ \lfloor \sqrt{p} \rfloor, 2 \lfloor \sqrt{p} \rfloor, ...., (\lfloor \sqrt{p} \rfloor)^2 \}$ respectively. 
 \end{defi}  

We will show that such matrices satisfy \textit{(P1)} and hence, one can perform one-stage $\Sigma \Delta$ quantization using them as measurement matrices.  We will analyze the corresponding approximation error in two scenarios: First, we fix the sparsity level and vary the number of measurements. Next,  we fix the number of measurements and vary the sparsity level.

\subsection{Approximation error as the number of measurements grows}
\label{pgrows}

 In this case, we fix the signal $x$ and we will increase the ambient dimension and the number of measurements while fixing the vector $x$ by embedding $x$ into higher dimensional space. This is because for the class of matrices defined above, to increase the number of measurements $p$, we must also increase the ambient dimension, which is equal to $p \lfloor \sqrt{p} \rfloor$.  As such, we evaluate the error in quantization using one-stage  $\Sigma \Delta$ quantization as the number of measurements $p$ increases. 
 
  First, we prove that the class of matrices defined in Definition \ref{barphi} satisfy the property (\textit{P1}) of order $(k,\ell)$ for appropriate choices of $k$, and $\ell$. 

\begin{Theor} \label{measurementgrows} Consider the $p \times p \lfloor \sqrt{p} \rfloor $  matrix $\bar{\Phi}$ as defined in Definition \ref{barphi}. Then there exists a prime number $p_0$ such that for $p \geq p_0$, the matrix $\Phi$ satisfies the property (P1) of order $(k,\ell)$ for $k \leq \sqrt[4]{p} \log p $ and $\ell =\lfloor p^{3/4} \log^2p \rfloor$.

%Define the CS measurement matrix $\Phi$ as $\Phi=U^T \Phi^{"}$ where $U$ is the matrix appeared in singular value decomposition of $D^r$. Consider a $k$-sparse signal $x \in \mathbb{R}^n$ with $k \lesssim \log p  $ and $ p\sqrt{p} \geq n$. Furthermore, consider the reconstruction vector $\hat{x}$ given by  \begin{equation} \label{delta1}(\hat{x},\hat{\nu}) := \arg \min_{(z,\nu)} \|z\|_1 \mbox{ \ \ \  subject to \ \ \ } \|D^{-r} (\Phi z +\nu-q) \|_2 \leq \gamma(r) \sqrt{m} \mbox{ \ \ \ and \ \ \ } \| \nu \|_2 \leq \epsilon \sqrt{m} \end{equation} 
%

\end{Theor}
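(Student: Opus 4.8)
I would deduce (P1) for $\bar\Phi$ from a bound on the \emph{coherence} of the column-normalized matrix $A := \tfrac{1}{\sqrt{\ell}}(\bar\Phi)_\ell$. Write $\mu$ for the largest modulus of an off-diagonal entry of $A^*A$; by a Ger{\v s}gorin-disc argument the restricted isometry constant obeys $\delta_{2k}(A)\le (2k-1)\mu$, so it is enough to show $\mu\le c/(\sqrt[4]{p}\,\log p)$ for a small absolute constant $c$, and then, if necessary, restrict $k$ to $k\le c'\sqrt[4]{p}\,\log p$ (or inflate the constant inside $\ell$) to land strictly below $1/9$; note $\ell\gg 2k$ so that RIP of order $2k$ on an $\ell$-row matrix is not vacuous. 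The off-diagonal entries of $A^*A$ are, up to the factor $1/\ell$, the truncated exponential sums
$$ S(a,b) := \sum_{x=0}^{\ell-1} \omega^{\,a x^2 + b x}, $$
where $a=r-r'$ and $b=m-m'$ with $m,m'$ distinct elements of $\{\lfloor\sqrt p\rfloor,2\lfloor\sqrt p\rfloor,\dots,\lfloor\sqrt p\rfloor^2\}$; in particular $b$ is a \emph{nonzero} multiple of $\lfloor\sqrt p\rfloor$ modulo $p$ whenever $m\ne m'$. Since two columns of $\bar\Phi$ coincide only if $(r,m)=(r',m')$, the pair $(a,b)$ is never $(0,0)$, and everything reduces to showing $|S(a,b)|=O(\sqrt p\,\log p)$.

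Next I would split into two cases. \textbf{Case $a\not\equiv 0\pmod p$:} then $S(a,b)$ is an incomplete quadratic Gauss sum, and I would bound it by the completion method — expand the indicator of $\{0,\dots,\ell-1\}$ into additive characters to write
$$ S(a,b) = \frac1p\sum_{t=0}^{p-1}\Big(\sum_{u=0}^{\ell-1}\omega^{-tu}\Big)\, G(a,b+t), \qquad G(a,c):=\sum_{x=0}^{p-1}\omega^{ax^2+cx}, $$
then invoke the classical identity $|G(a,c)|=\sqrt p$ for $p\nmid a$ (legitimate since $p$ is an odd prime coprime to $a$) together with the elementary estimate $\sum_{t=0}^{p-1}\big|\sum_{u=0}^{\ell-1}\omega^{-tu}\big|\le\sum_{t=0}^{p-1}\min\!\big(\ell,\tfrac{1}{2\|t/p\|}\big)=O(p\log p)$, giving $|S(a,b)|\le\tfrac1p\cdot\sqrt p\cdot O(p\log p)=O(\sqrt p\,\log p)$. \textbf{Case $a\equiv 0\pmod p$ (so $b\not\equiv 0$):} now $S(0,b)$ is a geometric sum and $|S(0,b)|\le 1/|\sin(\pi b/p)|$. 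Because $b\equiv j\lfloor\sqrt p\rfloor\pmod p$ with $1\le|j|\le\lfloor\sqrt p\rfloor-1$ and $\lfloor\sqrt p\rfloor^2<p$, the least residue of $b$ lies in $[\lfloor\sqrt p\rfloor,\,p-\lfloor\sqrt p\rfloor)$, hence at distance $\ge\lfloor\sqrt p\rfloor$ from every multiple of $p$ — this is precisely the arithmetic purpose of sampling $m$ along an arithmetic progression of step $\lfloor\sqrt p\rfloor$, which eliminates the short geometric sums responsible for the failure of (P1) for the full chirp matrix shown above. Hence $\|b/p\|\gtrsim 1/\sqrt p$ and $|S(0,b)|=O(\sqrt p)$, even smaller than in the first case.

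Combining the cases, $\mu=\max_{(a,b)\ne(0,0)}\tfrac1\ell|S(a,b)|\le C_0\sqrt p\,\log p/\ell$, and substituting $\ell=\lfloor p^{3/4}\log^2 p\rfloor\ge\tfrac12 p^{3/4}\log^2 p$ (valid for $p$ large) gives $\mu\le 2C_0/(\sqrt[4]{p}\,\log p)$; thus $\delta_{2k}(A)\le(2k-1)\mu\le 4C_0\,k/(\sqrt[4]{p}\,\log p)<1/9$ as soon as $k$ is at most a small constant multiple of $\sqrt[4]{p}\,\log p$. Choosing $p\ge p_0$ also ensures $\ell<p$ (so the restriction to $\ell$ rows is meaningful) and swallows all floor-function and lower-order slack. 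The step I expect to be the real work is Case $a\not\equiv 0$: proving the $O(\sqrt p\,\log p)$ bound for the incomplete quadratic Gauss sum carefully — extracting the right logarithmic factor from $\sum_t\min(\ell,1/\|t/p\|)$ and making sure the completion identity and the Gauss-sum evaluation are applied under the correct hypotheses ($p$ odd, $p\nmid a$). A secondary nuisance is tracking absolute constants so the final estimate is genuinely $<1/9$ rather than merely $O(1)$, which may force the stated range of $k$ to carry an implicit small constant.
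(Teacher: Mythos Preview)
Your strategy is essentially the paper's: bound the coherence of $\frac{1}{\sqrt\ell}(\bar\Phi)_\ell$, split the off-diagonal inner products according to whether the quadratic coefficient $a=r-r'$ vanishes, treat the linear case via the geometric-sum bound together with the observation that the $\lfloor\sqrt p\rfloor$-spacing of the $m$-parameters forces $\|b/p\|\gtrsim 1/\sqrt p$, and then pass from coherence to RIP. The one substantive difference is the tool used for the incomplete quadratic sum when $a\not\equiv 0$: the paper invokes Weyl's inequality (Theorem~\ref{weylquadratic}(a) with $q=p$), which gives $|S(a,b)|\lesssim \ell/\sqrt p+\sqrt{\ell\log p}+\sqrt{p\log p}=O(\sqrt{p\log p})$, whereas your completion argument yields only $O(\sqrt p\,\log p)$. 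That extra saving of $\sqrt{\log p}$ is precisely what lets the paper sidestep the constant-tracking you flag at the end: with Weyl one gets $\mu\lesssim 1/(p^{1/4}\log^{3/2}p)$, hence $k\mu\lesssim 1/\sqrt{\log p}\to 0$, so $\delta_{2k}<1/9$ follows for all $p\ge p_0$ with no implicit constant on the range of $k$. With your completion bound the product $k\mu$ is merely $O(1)$, and you would indeed need to shrink $k$ to a small constant multiple of $p^{1/4}\log p$, as you anticipate. Both routes are correct; the paper's choice of Weyl simply buys the full stated range of $k$ for free.
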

%\begin{equation} \label{delta1}  \hat{x}=\arg\min\|\tilde{z}\|_1 \ , \ \ \ \ \textrm{subject  to \  \ \ } \ \|D^{-r}(\Phi\tilde{z}-q)\|_{\infty} \leq \frac{\delta}{2} \end{equation} 

To prove this theorem, we will use the following result about an estimate for exponential sums, given by Weyl \cite[p. 41]{montgomery}.  

\begin{Theor}

\label{weylquadratic}

\noindent (a) Suppose that $P(x)=\alpha x^2+ \beta x + \gamma$ where $\alpha$ satisfies $$\Big| \alpha-\frac{a}{q} \Big| \leq \frac{1}{q^2}$$ for some relatively prime integers $a$ and $q$. Then $$\Big| \sum_{n=1}^{N} e(P(n)) \Big| \lesssim \frac{N}{\sqrt{q}}+\sqrt{N \log q}+\sqrt{ q \log q} $$ where $e(x)=e^{2 \pi i x}$ and the notation $f \lesssim g$ means $|f| \leq C g$ for a constant $C$ and for all values of the free variables under consideration. \\

\noindent (b) Suppose that $P(x)=\beta x + \gamma $. Then $$\Big| \sum_{n=1}^{N} e(P(n)) \Big| \leq \frac{1}{2 \|\beta \|}$$ where $\| \beta \|$ is the distance to the nearest integer.  \\
\end{Theor}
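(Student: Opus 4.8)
The plan is to prove the two parts separately, establishing (b) first by a direct geometric-series estimate and then feeding it in as the inner-sum bound for the Weyl differencing argument that yields (a). For part (b), with $P(x)=\beta x+\gamma$ I would sum the geometric series: $\sum_{n=1}^N e(P(n))=e(\gamma)\,e(\beta)\,\frac{e(N\beta)-1}{e(\beta)-1}$, so the modulus is at most $\frac{2}{|e(\beta)-1|}=\frac{1}{|\sin(\pi\beta)|}=\frac{1}{\sin(\pi\|\beta\|)}$. Applying the elementary concavity inequality $\sin(\pi t)\ge 2t$ on $0\le t\le 1/2$ with $t=\|\beta\|$ then gives $\bigl|\sum_{n=1}^N e(P(n))\bigr|\le \frac{1}{2\|\beta\|}$, which is exactly the claim.

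For part (a), I would carry out the standard Weyl differencing step. Writing $S=\sum_{n=1}^N e(P(n))$ and expanding $|S|^2=\sum_{n,m}e(P(n)-P(m))$, the substitution $n=m+h$ gives $P(m+h)-P(m)=2\alpha h m+(\alpha h^2+\beta h)$, so for each fixed $h$ the inner sum over the range of $m$ (an interval of at most $N-|h|$ integers) is a linear exponential sum whose phase is $2\alpha h m$ plus a quantity constant in $m$. By part (b), together with the trivial bound by the number of terms, each such inner sum is at most $\min\bigl(N,\frac{1}{2\|2\alpha h\|}\bigr)\le \min\bigl(N,\frac{1}{\|2\alpha h\|}\bigr)$. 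Collecting the $h=0$ contribution ($=N$) separately yields $|S|^2\le N+2\sum_{h=1}^{N-1}\min\bigl(N,\frac{1}{\|2\alpha h\|}\bigr)$.

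The heart of the matter, and the step I expect to be the main obstacle, is the counting lemma: whenever $|\theta-a/q|\le 1/q^2$ with $(a,q)=1$, one has $\sum_{h=1}^{H}\min\bigl(U,\frac{1}{\|\theta h\|}\bigr)\lesssim\bigl(\frac{H}{q}+1\bigr)\bigl(U+q\log q\bigr)$. I would prove this by partitioning $\{1,\dots,H\}$ into at most $\frac{H}{q}+1$ blocks of $q$ consecutive integers. On a single block, since $(a,q)=1$ the residues $ah \bmod q$ run through a complete residue system, so the main terms $ah/q$ fall exactly on the grid $\{0,\tfrac1q,\dots,\tfrac{q-1}{q}\}$; the perturbation $(\theta-a/q)h$ varies by at most $q\cdot q^{-2}=1/q$ across the block, so only $O(1)$ of the values $\|\theta h\|$ can lie in any cell $[\tfrac{j}{q},\tfrac{j+1}{q})$. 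Bounding the contribution of such a cell by $\min(U,q/j)$ and summing gives $U+q\sum_{j\le q/2}\tfrac1j\lesssim U+q\log q$ per block; the delicate bookkeeping is exactly the claim that, after the shift, the grid points cannot accumulate more than boundedly in any single cell, and this is where I would spend the most care.

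Finally, I would apply the counting lemma with $\theta=2\alpha$, $U=N$, $H=N-1$, noting that $2\alpha$ inherits a rational approximation with denominator comparable to $q$ so the factor of $2$ affects only constants. This gives $|S|^2\lesssim N+\bigl(\frac{N}{q}+1\bigr)(N+q\log q)\lesssim \frac{N^2}{q}+N\log q+q\log q$, after absorbing the lower-order $N$ into $N\log q$ and dispatching the bounded-$q$ cases by the trivial estimate $|S|\le N$. Taking square roots and using $\sqrt{a+b+c}\le\sqrt a+\sqrt b+\sqrt c$ then produces $|S|\lesssim \frac{N}{\sqrt q}+\sqrt{N\log q}+\sqrt{q\log q}$, which is precisely the stated bound.
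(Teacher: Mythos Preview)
Your argument is correct and is precisely the classical proof: the geometric-series computation for (b), Weyl differencing for (a), followed by the Vinogradov-type counting lemma $\sum_{h\le H}\min(U,\|\theta h\|^{-1})\lesssim (H/q+1)(U+q\log q)$, and finally taking square roots. The only point worth tightening is the passage from $\alpha$ to $2\alpha$: when $q$ is even the reduced denominator halves, and when $q$ is odd the error becomes $2/q^2$ rather than $1/q^2$, but in both cases the counting lemma still applies with denominator $\asymp q$ and the constants are absorbed into the $\lesssim$, exactly as you say.

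For comparison with the paper: the paper does not give its own proof of this theorem. It is quoted verbatim as a classical result of Weyl, with a citation to Montgomery's \emph{Ten Lectures} (p.~41), and is used only as a black box in the coherence estimates of Theorems~\ref{measurementgrows}, \ref{alphatheor}, and \ref{p1def}. What you have written is essentially the argument one finds in that reference, so there is no divergence in approach to discuss.
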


\begin{proof}[Proof of Theorem \ref{measurementgrows}] First we define the \textit{incomplete Gauss type} sum $S(r,m,p,\ell)$  for $r$, $m$, $p$, and $\ell$ as given in the theorem via 

$$S(r,m,p,\ell):=e^{\frac{2 \pi i}{p} [r+m]}+e^{\frac{2 \pi i}{p} [4r+2m]}+...+e^{\frac{2 \pi i}{p} [r \ell^2 + m \ell]}$$

\noindent Suppose that  $v_a$ and $v_b$ are two distinct columns of $\frac{1}{\sqrt{\ell}} \Phi_{\ell}$ corresponding to the values of $r_1 , m_1$ and $r_2 , m_2$ (i.e., $a=r_1p+m_1+1$ and $b=r_2p+m_2+1$). Then \begin{equation} \label{vavb}  | \langle v_a , v_b \rangle | = \frac{1}{\ell} |S(r_2-r_1,m_2-m_1,p,\ell)| \end{equation}

\noindent To bound the RHS in above, we need to consider two cases.

 \textbf{Case 1.} If $r_1 \ne r_2$, we bound $ | \langle v_a , v_b \rangle |$ by setting $\ell=\lfloor  p^{3/4} \log^2 p \rfloor$. For this purpose, we apply part (a) of Theorem \ref{weylquadratic} mentioned above with  $\alpha=\frac{r_2-r_1}{p}$ which is of the form $\frac{a}{q}$ for relatively prime integers $a=r_2-r_1$ and $q=p$. Hence, by using part (a) of Theorem \ref{weylquadratic}, and  using the fact that $\frac{p^{3/4} \log ^2 p}{2} \leq p^{3/4} \log^2 p -1 \leq \ell \leq  p^{3/4} \log^2 p $ for $p \geq 3$, we obtain  \begin{equation}  \label{inner} \begin{aligned} | \langle v_a , v_b \rangle | & \lesssim  \frac{1}{p^{3/4} \log^2 p} |S(r_2-r_1,m_2-m_1,p,\ell)| \\&  \lesssim  \frac{1}{p^{3/4} \log^2 p} \Big( p^{1/4} \log^2p+p^{3/8} \log^{3/2} p+\sqrt{p \log p} \Big) \lesssim  \frac{1}{p^{1/4} \log^{3/2} p} \end{aligned} \end{equation} 

\noindent \textbf{Case 2.}  If $r_1 = r_2$, we set $\ell= \lfloor p^{3/4} \log^2 p \rfloor $ and we use the fact that the set of possible values of $m$ are $\{\lfloor \sqrt{p} \rfloor , 2\lfloor \sqrt{p} \rfloor ,3 \lfloor \sqrt{p} \rfloor,....,(\lfloor \sqrt{p} \rfloor)^2 \} $, and we use part (b) of Theorem \ref{weylquadratic}. Note that in our problem $\beta=\frac{m_2-m_1}{p}$, and $ -1 \leq \beta \leq 1$. Accordingly, to evaluate $\| \beta\|$, we evaluate $\min \{ |\beta|, 1-|\beta| \}$. Next, $$| \beta| \geq \frac{\lfloor \sqrt{p} \rfloor}{p} \geq \frac{\sqrt{p}-1}{p} \geq \frac{0.1 \sqrt{p}}{p}=\frac{0.1}{\sqrt{p}}$$ for any prime number $p$. Also, 

$$|\beta| \leq \frac{(\lfloor \sqrt{p} \rfloor -1) \lfloor \sqrt{p} \rfloor }{p} \leq \frac{(\sqrt{p}-1)\sqrt{p}}{p}=1-\frac{1}{\sqrt{p}}$$
\noindent which implies $1-|\beta| \geq \frac{1}{\sqrt{p}}$. Therefore, in any case,  $\| \beta \| \geq \frac{0.1}{\sqrt{p}} $,  and since we set the value of $\ell$ to be $\ell=\lfloor p^{3/4} \log^2 p \rfloor $, we will have \begin{equation} \label{inner2} |\langle v_a , v_b \rangle | \lesssim \frac{1}{p^{3/4} \log^2 p}  \sqrt{p}  \lesssim \frac{ 1}{p^{1/4} \log^2 p}.  \end{equation}   Combining the equations \eqref{inner} and \eqref{inner2} we obtain  $$|\langle v_a , v_b \rangle | \lesssim \frac{1}{p^{1/4}\log^{3/2} p}$$ Since the columns of the matrix $\Phi_{\ell}$ have unit norm, we can conclude that the coherence of this matrix, $\mu$, satisfies $\mu \lesssim \frac{1}{p^{1/4}\log^{3/2} p}$.  Therefore, there exists a prime number $p_0$ such that for $ p \geq p_0$ : $$ \delta_k \leq k \mu \lesssim  p^{1/4} \log p \frac{1}{p^{1/4}\log^{3/2} p}=\frac{1}{\sqrt{ \log p}} < \frac{1}{9} $$
\end{proof}

\noindent Next, we prove  the following  corollary which shows that we can indeed use $\bar{\Phi}$ along with a $r$th-order $\Sigma \Delta$ quantizer. 
%

%Using this theorem, we can perform $\Sigma \Delta$ quantization followed by a one-stage reconstruction method using the construction given in Definition \ref{barphi}. Here, we have two options two choose from. One is putting no limit on the maximum number of measurements. In this case, we can prove that the bound on the error of approximation behaves like $p^{-\frac{1}{4}(r-\frac{1}{2}}$ in the following sense.

\begin{cor}  \label{cor11}
% In the setting of Theorem~\ref{measurementgrows}, suppose $p>p_0$ and fix $n\le p_0\lfloor \sqrt{p_0} \rfloor$. Then any $x\in \Sigma_k^n$ can be approximated by $\hat{x}$, the solution to \eqref{onestagesol}, if 
% {\color{cyan} I think we can change the sentence above with:
Let $x \in \Sigma_{k}^n$, let $p_0$ be as defined in Theorem~\ref{measurementgrows}, and suppose that $p_1>p_0$ is a prime number such that $k \leq \sqrt[4]{p_1} \log p_1$. Then, for any $p \geq p_1$, $x$ can be approximated by $\hat{x}$, the solution to \eqref{onestagesol}, if 
%the measurement matrix is $U \bar{\Phi}$ and  $q$ is obtained by quantizing $U \Phi x$ using an $r$th order $\Sigma \Delta$ scheme.
\begin{enumerate}
\item the measurement matrix is $U \bar{\Phi}$, where $\bar{\Phi}$ is the $ p \times p \lfloor \sqrt{p} \rfloor $ matrix defined as in Definition \ref{barphi},  and % where $\bar{\Phi}$ is as in Definition  \ref{barphi}; 
\item $q$ is obtained by quantizing $U \bar{\Phi}$ using an $r$th order $\Sigma \Delta$ scheme. 
\end{enumerate}
In the noise-free case,  as we increase the number of measurements $p$, the approximation error satisfies  % \begin{equation} \label{kfixed} \|x-\hat{x} \|_2 \leq  C (p_0^{3/4} \log^2 p_0)^{r-1/2}  p^{-r+1/2}        \end{equation} 
\begin{equation} \label{kfixed} \|x-\hat{x} \|_2 \leq  C (3 \pi r)^r \delta (\log p)^{2r-1}   p^{-\frac{1}{4}(r-\frac{1}{2})}        \end{equation} 
where $C$ is a constant that does not depend on $r, p_0$, $p$, and $p_1$. 
 \end{cor}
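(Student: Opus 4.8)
The plan is to read \eqref{kfixed} off as a direct consequence of two results already in hand: Theorem~\ref{measurementgrows}, which certifies that $\bar\Phi$ satisfies property (P1) of order $(k,\ell)$ with $\ell=\lfloor p^{3/4}\log^2 p\rfloor$, and Theorem~\ref{recoveryp1} (Approach~1), which converts (P1) plus measurement by $U\bar\Phi$ and $r$th-order $\Sigma\Delta$ quantization into the reconstruction bound \eqref{lm6}. The only real work is verifying that the hypotheses of those theorems hold uniformly for all $p\ge p_1$, and then simplifying the resulting bound using the explicit size of $\ell$.

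First I would check the admissibility conditions. Theorem~\ref{measurementgrows} requires $k\le\sqrt[4]{p}\,\log p$; since $t\mapsto t^{1/4}\log t$ is increasing on $[1,\infty)$ and $k\le\sqrt[4]{p_1}\,\log p_1$ by assumption, this holds for every $p\ge p_1$. Enlarging $p_0$ (hence $p_1$) once and for all if necessary, I would also record that for such $p$ one has $k<\ell=\lfloor p^{3/4}\log^2 p\rfloor\le p$, so the order $(k,\ell)$ is admissible in the sense $k<\ell\le m$ of Theorem~\ref{recoveryp1}, with $m=p$ the number of rows of $\bar\Phi$. Note also that each column of $\bar\Phi$ has entries of modulus $1$ and hence norm exactly $\sqrt p=\sqrt m$, so the normalization convention built into (P1) is met. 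Thus $\bar\Phi$ satisfies (P1) of order $(k,\ell)$ for all $p\ge p_1$.

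Next I would apply Theorem~\ref{recoveryp1} with $\Phi=\bar\Phi$, $m=p$, and $\ell=\lfloor p^{3/4}\log^2 p\rfloor$: the measurement matrix is $\tilde\Phi=U\bar\Phi$ and the data are $\Sigma\Delta$-quantized at order $r$, exactly as in the theorem. Since $x\in\Sigma_k^n$ we have $\sigma_k(x)=0$, and in the noise-free case $\epsilon=0$, so \eqref{lm6} collapses to $\|x-\hat x\|_2\le C(3\pi r)^r\delta\,(p/\ell)^{-r+1/2}$, where I have made the $r$-dependence of the constant explicit. The factor $(3\pi r)^r$ is the one appearing in the Weyl-type lower bound on the singular value $\sigma_\ell(H^\dagger)$ used in the proof of Theorem~\ref{recoveryp1} (cf.\ the proof of Theorem~\ref{withmsq} and (22) of \cite{rongrong}); after isolating it, the remaining constant $C$ depends only on the RIP constant $1/9$, not on $r$, $p$, $p_0$, or $p_1$. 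It remains to estimate $(p/\ell)^{-r+1/2}$. Using $\tfrac12 p^{3/4}\log^2 p\le\ell\le p^{3/4}\log^2 p$ for $p\ge 3$ — the same elementary inequality used in the proof of Theorem~\ref{measurementgrows} — I get $p^{1/4}/\log^2 p\le p/\ell\le 2p^{1/4}/\log^2 p$; since $r-1/2>0$ for every $r\ge 1$, raising to the power $-(r-1/2)$ gives $(p/\ell)^{-r+1/2}\le(\log p)^{2r-1}\,p^{-\frac14(r-1/2)}$, and substituting yields \eqref{kfixed}. (Recall that increasing $p$ forces the ambient dimension $n=p\lfloor\sqrt p\rfloor$ to grow; one simply zero-pads $x$ into the larger space, which leaves it $k$-sparse and preserves $\|x\|_2$, so the bound is unaffected.)

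I do not expect a genuine obstacle: all the substance sits in Theorems~\ref{measurementgrows} and~\ref{recoveryp1}. The two points that need care are (i) the monotonicity argument that keeps the condition $k\le\sqrt[4]{p}\,\log p$ valid for \emph{all} $p\ge p_1$ from the single hypothesis at $p_1$, and (ii) the bookkeeping that pulls the clean $(3\pi r)^r$ factor out of the constant in Theorem~\ref{recoveryp1} so that the stated $C$ is genuinely independent of $r$. Everything else is the elementary arithmetic converting $p/\ell\asymp p^{1/4}/\log^2 p$ into the exponents displayed in \eqref{kfixed}.
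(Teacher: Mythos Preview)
Your proposal is correct and follows essentially the same approach as the paper's own proof: invoke Theorem~\ref{measurementgrows} to obtain (P1) of order $(k,\ell)$ with $\ell=\lfloor p^{3/4}\log^2 p\rfloor$, then apply Theorem~\ref{recoveryp1} with $m=p$, $\sigma_k(x)=0$, $\epsilon=0$, and simplify $(p/\ell)^{-r+1/2}$. Your write-up is in fact more careful than the paper's, which leaves the monotonicity of $t\mapsto t^{1/4}\log t$, the admissibility check $k<\ell\le m$, the column-norm verification, and the arithmetic on $(p/\ell)^{-r+1/2}$ entirely implicit.
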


 \begin{proof} 
 
% Let $p_0$ be the prime number given by Theorem \ref{measurementgrows}, and let $p_0'$ be the minimum prime number that has the property $k \leq \sqrt[4]{p_0'} \log p_0'$. Define the prime number $p_1$ as  $$p_1 :=\max\{p_0,p_0'\}$$

  Set $\ell= \lfloor p^{3/4} \log^2 p \rfloor $. Then, since $p>p_1$, we have $k \leq \sqrt[4]{p} \log p$. Thus,   by Theorem \ref{measurementgrows},  the $p \times p \lfloor \sqrt{p} \rfloor $ matrix $\bar{\Phi}$ satisfies \textit{(P1)} of order $(k,\ell)$, and hence  the vector $x$ can be approximated by $\hat{x}$. Moreover, by Theorem \ref{recoveryp1}, as $p$ increases in the noise-free case, the error in approximation satisfies  
 
 \begin{equation} \label{kfixed} \|x-\hat{x} \|_2 \leq  C (3 \pi r)^r \delta (\log p)^{2r-1}   p^{-\frac{1}{4}(r-\frac{1}{2})}        \end{equation} 
where $C$ is a constant that does not depend on $r, p_0$, $p$, and $p_1$.

%  \begin{equation}  \|x-\hat{x}\|_2 \leq 2  \Bigg( C_r C_4 (3 \pi r)^r (p_1^{3/4} \log^2 p_1)^{r-1/2} \Bigg) p^{-r+1/2} \end{equation}
%  
%  \noindent where $C_4$ depends only on RIP constant of $\tilde{\Phi}$. 
%

 \end{proof}

% we want to have a better rate for the decay of error in approximation (similar to the rate proved in \cite{rongrong}, we can do so with the expense of putting a limit on the maximum number of measurements. In particular, we can prove the followinIfg result. 

 Note that the error decay rate $O( p^{-\frac{1}{4}(r-\frac{1}{2})} )$  (up to a factor logarithmic in $p$)  given in Corollary~\ref{cor11} is inferior to $O(p^{-(r-\frac{1}{2})})$ which we obtain with random matrices (with $m=p$ measurements). This behaviour is due to the fact that the both dimensions of $\bar{\Phi}$ increase as we increase $p$. One way to circumvent this issue is to restrict the maximum number of measurements to some  $p_{\rm{max}}$. In the following theorem, we will prove that under such circumstances, the approximation error behaves like $p^{-(r-\frac{1}{2})}$, similar to the case with random matrices. 
 
 % Indeed, that way one can show that the approximation error behaves like $p^{-(r-\frac{1}{2})}$, similar to the case with random matrices. 
 
  \begin{Theor} \label{alphatheor}
 
  Fix $ \alpha , \beta >0$, with $\alpha+\beta/2 < 1/2$. Let $x \in \Sigma_{k}^n$, and assume that  $p_0$ be as defined in Theorem~\ref{measurementgrows}. Suppose that  $p_1>p_0$ is a prime number such that $k \leq p_1^{  \alpha}$. Then, for any $p_1 \leq p \leq p_{max}$, where $p_{max}=\mathcal{O}(p_1^{1+\beta})$, the signal  $x$ can be approximated by $\hat{x}$, the solution to \eqref{onestagesol}, if

  \begin{enumerate}
%  \item $\alpha+\beta/2 \leq 1$ 
 \item the measurement matrix is $U \bar{\Phi}$,  where $\bar{\Phi}$ is the $ p \times p \lfloor \sqrt{p} \rfloor $ matrix defined as in Definition \ref{barphi},  and % where $\bar{\Phi}$ is as in Definition  \ref{barphi}; 
 \item $q$ is obtained by quantizing $U \Phi x$ using an $r$th order $\Sigma \Delta$ scheme. 
 \end{enumerate}
 In the noise-free case,  as we increase the number of measurements $p$, the approximation error satisfies  % \begin{equation} \label{kfixed} \|x-\hat{x} \|_2 \leq  C (p_0^{3/4} \log^2 p_0)^{r-1/2}  p^{-r+1/2}        \end{equation} 
 \begin{equation} \label{kfixed22} \|x-\hat{x} \|_2 \leq  D  \delta  p^{-(r-1/2)}        \end{equation} 
 where $D$ is a constant that depends on $p_1$, and order $r$, but does not depend on $p_0$ or $p$.

  \end{Theor}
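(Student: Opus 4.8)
The plan is to invoke the one-stage recovery bound of Theorem~\ref{recoveryp1} with a value of $\ell$ that is held \emph{fixed} over the whole range $p_1\le p\le p_{max}$, depending only on $p_1$ (and on $\alpha,\beta,r$), in contrast with the growing choice $\ell=\lfloor p^{3/4}\log^2 p\rfloor$ used in Theorem~\ref{measurementgrows}. The point is that if $\ell$ does not grow with $p$, then $m/\ell=p/\ell$ grows linearly in $p$, so the dominant term $(m/\ell)^{-r+1/2}\delta$ in \eqref{lm6} equals $\ell^{\,r-1/2}\,p^{-(r-1/2)}\delta$, which is exactly the rate in \eqref{kfixed22} once $\ell^{\,r-1/2}$ is absorbed into the constant. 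The cost is that this single $\ell$ must certify property \textit{(P1)} of order $(k,\ell)$ for \emph{every} $p$ in the range, and arranging that is precisely what forces both the cap $p_{max}=\mathcal{O}(p_1^{1+\beta})$ and the hypothesis $\alpha+\beta/2<1/2$.

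The first step is to rerun the exponential-sum estimate from the proof of Theorem~\ref{measurementgrows}, but keeping $\ell$ as a free parameter rather than substituting its special value. For two distinct columns $v_a,v_b$ of $\tfrac{1}{\sqrt{\ell}}\bar{\Phi}_{\ell}$ with parameters $(r_1,m_1)\ne(r_2,m_2)$ one still has $|\langle v_a,v_b\rangle|=\tfrac{1}{\ell}\,|S(r_2-r_1,m_2-m_1,p,\ell)|$ as in \eqref{vavb}. When $r_1\ne r_2$, applying Theorem~\ref{weylquadratic}(a) with $\alpha=(r_2-r_1)/p=a/q$, $q=p$ prime (so $a,q$ coprime), gives $|S|\lesssim \ell p^{-1/2}+\sqrt{\ell\log p}+\sqrt{p\log p}$; when $r_1=r_2$, the admissible values of $m$ are multiples of $\lfloor\sqrt p\rfloor$, so the relevant $\beta=(m_2-m_1)/p$ has $\|\beta\|\gtrsim p^{-1/2}$ and Theorem~\ref{weylquadratic}(b) gives $|S|\lesssim\sqrt p$. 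Dividing by $\ell$ and taking the maximum over columns, the coherence of $\bar{\Phi}_{\ell}$ obeys
\[ \mu(\bar{\Phi}_{\ell})\ \lesssim\ \frac{1}{\sqrt p}\ +\ \sqrt{\frac{\log p}{\ell}}\ +\ \frac{\sqrt{p\log p}}{\ell}. \]

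Next I would fix $p_{max}:=\lfloor c\,p_1^{1+\beta}\rfloor$ for a suitable constant $c$ and choose the fixed value $\ell:=\lceil C_0\,k\sqrt{p_{max}\log p_{max}}\,\rceil$ with $C_0$ large. For every $p$ with $p_1\le p\le p_{max}$, each of the three terms above is $\lesssim 1/k$: the first because $k\le p_1^{\alpha}\le\sqrt{p_1}\le\sqrt p$ (here $\alpha<1/2$); the third by the choice of $\ell$; and the second because $\ell\gtrsim k^2\log p_{max}$, which in turn is dominated by $k\sqrt{p_{max}\log p_{max}}$ precisely since $\alpha<1/2$. Hence $\mu(\bar{\Phi}_{\ell})\lesssim 1/k$, and the Gershgorin bound gives $\delta_{2k}\big(\tfrac{1}{\sqrt\ell}\bar{\Phi}_{\ell}\big)\le(2k-1)\,\mu(\bar{\Phi}_{\ell})<1/9$ for $p_1$ large, i.e.\ $\bar{\Phi}$ satisfies \textit{(P1)} of order $(k,\ell)$ throughout the range. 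The one genuinely new check is that this $\ell$ is at most $p_1$, so that restricting $\bar{\Phi}$ to its first $\ell$ rows is legitimate for every $p\ge p_1$: with $p_{max}\asymp p_1^{1+\beta}$ and $k\le p_1^{\alpha}$ one gets $\ell\asymp p_1^{\alpha+(1+\beta)/2}\sqrt{\log p_1}$, and the exponent $\alpha+(1+\beta)/2$ is strictly below $1$ exactly because $\alpha+\beta/2<1/2$; so $\ell\le p_1$ once $p_1$ is large enough. Finally, applying Theorem~\ref{recoveryp1} to the measurement matrix $U\bar{\Phi}$ with this $(k,\ell)$, $m=p$, and — in the noise-free case — $\epsilon=0$, $\sigma_k(x)=0$, the bound \eqref{lm6} reduces to $\|x-\hat x\|_2\le C\,(p/\ell)^{-r+1/2}\delta=(C\,\ell^{\,r-1/2})\,\delta\,p^{-(r-1/2)}$, so $D:=C\,\ell^{\,r-1/2}$ depends only on $p_1$ and on $r$, which is \eqref{kfixed22}.

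The main obstacle is this coherence computation together with the exponent bookkeeping around it: one has to carry $\ell$ cleanly through Weyl's bounds, and then confirm that a single $\ell\le p_1$ can serve at $p=p_1$ (where the $p^{-1/2}$ term in the coherence bound is largest) and simultaneously at $p=p_{max}$ (where the $\sqrt{p\log p}/\ell$ term is largest). That these two requirements can be reconciled at all is exactly the content of the hypothesis $\alpha+\beta/2<1/2$; with that in hand, what remains is tracking constants and fixing the threshold on $p_1$.
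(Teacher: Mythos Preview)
Your proposal is correct and follows essentially the same route as the paper: fix a single $\ell$ depending only on $p_1$, bound the coherence of $\tfrac{1}{\sqrt\ell}\bar\Phi_\ell$ via the Weyl estimates of Theorem~\ref{weylquadratic} uniformly over $p_1\le p\le p_{max}$, deduce $\delta_{2k}<1/9$ via Gershgorin, and then feed this into Theorem~\ref{recoveryp1} with $m=p$. The paper chooses $\ell=\lfloor p_1^{1/2+\alpha+\beta/2}\log^2 p_1\rfloor$, while your $\ell\asymp k\sqrt{p_{max}\log p_{max}}\asymp p_1^{1/2+\alpha+\beta/2}\sqrt{\log p_1}$ is the same up to a logarithmic factor; your write-up is in fact more explicit than the paper's in checking $\ell\le p_1$ and in identifying this check as the place where the hypothesis $\alpha+\beta/2<1/2$ enters.
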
 
 
  \begin{proof} Set $\ell=\lfloor p_1^{1/2+\alpha+\beta/2} \log ^2 p_1 \rfloor $.  Then, by using Theorem \ref{weylquadratic}, and  similar to the argument given for the proof of Theorem \ref{measurementgrows}, we conclude that the coherence of $\tilde{\Phi}$ satisfies $$\mu \lesssim \frac{1}{\ell} \Big( \frac{\ell}{\sqrt{p}}+\sqrt{ \ell \log p}+\sqrt{p \log p} +\sqrt{p} \Big) \lesssim \frac{1}{\ell} \sqrt{p \log p} \lesssim \frac{p_1^{1/2+\beta/2} \sqrt{\log p_1}}{p_1^{1/2+\alpha+\beta/2} \log^2 p_1} $$
  
  \noindent Hence, the RIP constant of $\frac{1}{\sqrt{\ell}} \bar{\Phi}$     satisfies $$\delta_k <k \mu \lesssim \frac{p_1^{1/2+\beta/2+\alpha}\sqrt{\log p}}{p_1^{1/2+\alpha+\beta/2} \log^2 p_1}=\frac{1}{\log^{3/2} p_1} < 1/9$$
  
  \noindent where we used  the fact that $\frac{1}{\log^{3/2} p_0 }< 1/9$, and $p_1 \geq p_0$. This means that $\bar{\Phi}$ satisfies the property \textit{(P1)} of order $(k,\ell)$, and hence the vector $x$ can be reconstructed using the solution of \eqref{onestagesol} if $U \bar{\Phi}$ is used as the measurement matrix. Also, by using $\ell=\lfloor p_1^{1/2+\alpha+\beta/2} \log ^2 p_1 \rfloor $, and $m=p$ in \eqref{onestagesol} we obtain the bound on the error in approximation \eqref{kfixed22} as desired. \end{proof}  
  
%   and let $\langle v_a, v_b \rangle$ be two arbirtary columns of $\bar{\Phi}$. Note that the terms that we should control their sizes are as follows. $$\frac{1}{\ell} \cdot \frac{\ell}{\sqrt{p}}, \ \ \ \frac{1}{\ell} (\sqrt{\ell \log p}), \ \  \frac{1}{\ell} \sqrt{p \log p} $$
% 
%  for the case of $r_1 \ne r_2$, and we should control the size of $\frac{1}{\ell} \sqrt{p}$ in the case of $r_1=r_2$. The biggest term among these 4 groups is obviously, $\frac{1}{\ell} \sqrt{p\log p}=\frac{1}{p_1^{1/2+\alpha} \log^2 p_1}p_1^{\beta/2} \sqrt{\log p_1}=\frac{p_1^{c' \alpha/2}}{p_1^{\alpha}} \frac{1}{\log^{3/2} p_1} $. Hence, $$\delta_k < k \mu < p^{c \alpha} \frac{p_1^{c \alpha/2}}{p_1^{\alpha}} \frac{1}{\log^{3/2} p_1}<1/9 $$ if  $c+c'/2 \leq 1 $ \end{proof} 
% 
 
  As an example of Theorem above, we can set $k=4$, $\alpha=0.34$, and $\beta=0.3$. Then $\alpha+\beta/2=0.49<1/2$, and we must choose $p_1$ such that $k=4<p_1^{0.34}$. We can observe that $p_1=61$ satisfies this inequality. Hence, if the number of measurements satisfies $61 \leq p \leq 61^{1.3}$, then the guarantee on the error bound \eqref{kfixed22} will hold. 
  
%   $c'=10/9$, $\alpha=0.45$. Then $c\alpha=0.2$, and $c' \alpha=0.5$. Also, a 3-sparse signal $x$, i.e., $k=3$ satisfies $3 \leq p_1^{0.2} $ for $p_1 >3^5$, say $p_1=251$, and the number of measurements must satisfy $253 \leq p \leq C \cdot 253^{3/2} \approx C \cdot 40242$. 
% 

%\hspace{-0.66cm} Hence, by replacing the values of $m=p$ and $\ell=\sqrt{p} \log^2 p$ into the first term in \eqref{lm} and $\ell=\sqrt{p} \log p$ into the second term in \eqref{lm}, we conclude that the solution $\hat{x}$ to \eqref{delta1} satisfies $$\|x-\hat{x} \| \leq 2 \sqrt{2} C' \tilde{\gamma} 3^r \pi^r r^r (\frac{\sqrt{p}}{\log^2 p})^{-r+\frac{1}{2}} \delta + 2 \sqrt{2} C' \tilde{\gamma}^2 \sqrt{ \frac{\sqrt{p}}{\log p} } \epsilon $$ for a constant $C'$ depending only on RIP constant of $A$. \end{proof}

 Combining Theorem \ref{alphatheor} and Corollary \ref{cor11}, we obtain the following result.

 \begin{cor} \label{alphacor}
 
  Fix $ \alpha , \beta >0$, with $\alpha+\beta/2 < 1/2$. Let $x \in \Sigma_{k}^n$, and assume that  $p_0$ be as defined in Theorem~\ref{measurementgrows}. Suppose that  $p_1>p_0$ is a prime number such that $k \leq p_1^{  \alpha}$. Then, for any $p \geq p_1$,  the signal  $x$ can be approximated by $\hat{x}$, the solution to \eqref{onestagesol}, if

  \begin{enumerate}
%  \item $\alpha+\beta/2 \leq 1$ 
 \item the measurement matrix is $U \bar{\Phi}$,  where $\bar{\Phi}$ is the $ p \times p \lfloor \sqrt{p} \rfloor $ matrix defined as in Definition \ref{barphi},  and % where $\bar{\Phi}$ is as in Definition  \ref{barphi}; 
 \item $q$ is obtained by quantizing $U \Phi x$ using an $r$th order $\Sigma \Delta$ scheme. 
 \end{enumerate}
 In the noise-free case,  as we increase the number of measurements $p$, the approximation error satisfies  % \begin{equation} \label{kfixed} \|x-\hat{x} \|_2 \leq  C (p_0^{3/4} \log^2 p_0)^{r-1/2}  p^{-r+1/2}        \end{equation} 
 \begin{equation} \label{kfixed22} \|x-\hat{x} \|_2 \leq  D  \delta  p^{-(r-1/2)}        \end{equation} 
 if $p \leq p_1^{1+\beta}$, and    \begin{equation} \label{kfixed} \|x-\hat{x} \|_2 \leq  C \delta (\log p)^{2r-1}   p^{-\frac{1}{4}(r-\frac{1}{2})}        \end{equation} 
if $p>p_1^{1+\beta}$.

  \end{cor}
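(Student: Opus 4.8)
The plan is to prove the two error estimates separately, on the two ranges of $p$ to which they apply, reducing each one to a result already established.

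\textbf{The range $p_1\le p\le p_1^{1+\beta}$.} Here I would invoke Theorem~\ref{alphatheor} directly, taking $p_{\max}$ to be a prime of size $\Theta(p_1^{1+\beta})$, so that $p_{\max}=\mathcal{O}(p_1^{1+\beta})$ and the hypothesis $p_1\le p\le p_{\max}$ coincides with the range in question. The standing assumptions $\alpha,\beta>0$, $\alpha+\beta/2<1/2$, $p_1>p_0$ prime, and $k\le p_1^{\alpha}$ are exactly those of Theorem~\ref{alphatheor}, so it applies verbatim: with the \emph{fixed} choice $\ell=\lfloor p_1^{1/2+\alpha+\beta/2}\log^2 p_1\rfloor$ it gives $\|x-\hat{x}\|_2\le D\,\delta\,p^{-(r-1/2)}$, i.e.\ \eqref{kfixed22}, with $D=D(p_1,r)$. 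No new argument is needed on this range.

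\textbf{The range $p>p_1^{1+\beta}$.} Here I would appeal to Corollary~\ref{cor11} (equivalently, re-run its short proof). Choose $\ell=\lfloor p^{3/4}\log^2 p\rfloor$; since $p>p_1^{1+\beta}>p_0$ and the sparsity $k$ is fixed, the hypothesis $k\le\sqrt[4]{p}\,\log p$ of Theorem~\ref{measurementgrows} is in force for every $p$ in this (unbounded) range, so $\bar{\Phi}$ satisfies \textit{(P1)} of order $(k,\ell)$. Consequently Theorem~\ref{recoveryp1} applies with measurement matrix $U\bar{\Phi}$, and in the noise-free $k$-sparse case ($\epsilon=0$, $\sigma_k(x)=0$) its bound \eqref{lm6} collapses to the single term $C(m/\ell)^{-r+1/2}\delta$. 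Substituting $m=p$ and $\ell\asymp p^{3/4}\log^2 p$ gives $m/\ell\asymp p^{1/4}/\log^2 p$, hence $(m/\ell)^{-r+1/2}\asymp(\log p)^{2r-1}\,p^{-\frac14(r-\frac12)}$; tracking the constant through the singular-value estimate (22) of \cite{rongrong} produces the usual $(3\pi r)^r$ factor, which with $r$ fixed may be kept explicit as in Corollary~\ref{cor11} or absorbed into $C$. This yields $\|x-\hat{x}\|_2\le C\,\delta\,(\log p)^{2r-1}\,p^{-\frac14(r-\frac12)}$, i.e.\ \eqref{kfixed}, with $C$ independent of $p$. Concatenating the two ranges gives the stated dichotomy.

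\textbf{Main obstacle.} The only delicate point is the hand-off near $p\approx p_1^{1+\beta}$: one must verify that the hypothesis $k\le\sqrt[4]{p}\,\log p$ demanded by Theorem~\ref{measurementgrows} on the second range is implied, for \emph{all} $p>p_1^{1+\beta}$, by the running hypothesis $k\le p_1^{\alpha}$. When $\beta\ge1/3$ this is automatic, since then $\alpha<\tfrac{1-\beta}{2}\le\tfrac{1+\beta}{4}$ forces $p_1^{\alpha}<\sqrt[4]{p_1^{1+\beta}}\le\sqrt[4]{p}$; for $\beta<1/3$ it holds once $p_1$ is large relative to the fixed sparsity $k$ (the asymptotic regime of interest) and can always be arranged by enlarging $p_1$, equivalently by invoking Corollary~\ref{cor11} with a suitably chosen auxiliary prime in $(p_0,p]$. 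A secondary, purely bookkeeping matter is that $D$ on the first range depends on $p_1$ through $\ell$ while $C$ on the second range is absolute; both should be reported exactly as in Theorem~\ref{alphatheor} and Corollary~\ref{cor11}.
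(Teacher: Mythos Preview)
Your proposal is correct and follows essentially the same approach as the paper: the paper gives no detailed proof at all, merely stating ``Combining Theorem~\ref{alphatheor} and Corollary~\ref{cor11}, we obtain the following result,'' which is precisely your two-range split. In fact you go further than the paper by flagging the hand-off issue at $p\approx p_1^{1+\beta}$ (namely, whether $k\le p_1^{\alpha}$ forces $k\le\sqrt[4]{p}\log p$ on the upper range); the paper does not address this point, so your caveat is a genuine refinement rather than a divergence.
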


\subsection{Approximation error as the sparsity level  varies}
\label{kvaries}

In the previous section, we saw that if we use an appropriate measurement matrix, and an appropriate approximation scheme, then as we increase the number of measurements, the error in approximation decreases. Our objective in this section is to fix the number of measurement (which also fixes the ambient dimension) and reduce the sparsity level $k$. We expect to observe a similar behaviour to what we observed above,  and see a decay in error in approximation. \\

\begin{Theor} \label{p1def} Consider the CS matrix  $\bar{\Phi}$ as defined in Definition \ref{barphi}. There exists a prime number $p_1$ such that for a fixed number of measurements $p$, with $p \geq p_1$,  $1 \leq k \leq \frac{ \sqrt{p}}{\log p} $ and $\ell =\lfloor k \sqrt{p} \log p \rfloor $, the matrix $\bar{\Phi}$ satisfies the property (P1) of order $(k,\ell)$.  \end{Theor}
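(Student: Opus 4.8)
The plan is to mimic the proof of Theorem~\ref{measurementgrows}: bound the coherence $\mu$ of the matrix $\tfrac{1}{\sqrt{\ell}}\bar{\Phi}_{\ell}$ (the restriction of $\bar{\Phi}$ to its first $\ell$ rows) via Weyl's estimate on the incomplete Gauss sums $S(r,m,p,\ell)$, and then invoke the coherence bound $\delta_{2k}\le 2k\mu$ to conclude that $\delta_{2k}<1/9$ once $p$ is large enough. Throughout I would fix $\ell=\lfloor k\sqrt{p}\,\log p\rfloor$; the hypothesis $k\le \sqrt{p}/\log p$ is exactly what guarantees $\ell\le p=m$, so that $(k,\ell)$ is an admissible order for property \textit{(P1)}, while $\ell\ge k$ (for $p$ large) handles the requirement $k<\ell$.

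First I would fix two distinct columns $v_a,v_b$ of $\tfrac{1}{\sqrt{\ell}}\bar{\Phi}_{\ell}$, corresponding to parameters $(r_1,m_1)$ and $(r_2,m_2)$ with $r_i\in\mathbb{Z}_p$ and $m_i\in\{\lfloor\sqrt{p}\rfloor,2\lfloor\sqrt{p}\rfloor,\dots,(\lfloor\sqrt{p}\rfloor)^2\}$, so that, exactly as in \eqref{vavb}, $|\langle v_a,v_b\rangle|=\tfrac{1}{\ell}|S(r_2-r_1,m_2-m_1,p,\ell)|$, and then split into the same two cases as before. If $r_1\ne r_2$, I would apply part~(a) of Theorem~\ref{weylquadratic} with $\alpha=(r_2-r_1)/p=a/q$, $a=r_2-r_1$, $q=p$ (coprime since $p$ is prime and $0<|r_2-r_1|<p$, and $|\alpha-a/q|=0\le 1/q^2$), obtaining $|S|\lesssim \ell/\sqrt{p}+\sqrt{\ell\log p}+\sqrt{p\log p}$; dividing by $\ell\asymp k\sqrt{p}\,\log p$ gives the three terms $p^{-1/2}$, $k^{-1/2}p^{-1/4}$, and $(k\sqrt{\log p})^{-1}$, and using $k\le\sqrt{p}/\log p$ one checks the last dominates, so $|\langle v_a,v_b\rangle|\lesssim \tfrac{1}{k\sqrt{\log p}}$. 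If $r_1=r_2$ then $m_1\ne m_2$, and I would use part~(b): with $\beta=(m_2-m_1)/p$, the spacing of the admissible $m$-values gives $\|\beta\|\ge\min\{|\beta|,1-|\beta|\}\gtrsim p^{-1/2}$ exactly as in the proof of Theorem~\ref{measurementgrows}, hence $|S|\le \tfrac{1}{2\|\beta\|}\lesssim \sqrt{p}$ and $|\langle v_a,v_b\rangle|\lesssim \sqrt{p}/\ell\lesssim \tfrac{1}{k\log p}$.

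Combining the two cases, the coherence of $\bar{\Phi}_{\ell}$ satisfies $\mu\lesssim \tfrac{1}{k\sqrt{\log p}}$ (the Case~1 bound being the larger one). Since the columns of $\bar{\Phi}_{\ell}$ are unit-normed, Gershgorin's theorem applied to $\tfrac1\ell(\bar{\Phi}_{\ell})_S^{*}(\bar{\Phi}_{\ell})_S$ over index sets $S$ with $|S|\le 2k$ yields $\delta_{2k}\le 2k\mu\lesssim \tfrac{1}{\sqrt{\log p}}$, so there is a prime $p_1$ (absorbing the implicit constant) such that $\delta_{2k}<1/9$ for every prime $p\ge p_1$. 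This is precisely property \textit{(P1)} of order $(k,\ell)$ for $\bar{\Phi}$, which completes the proof.

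I expect the only delicate point to be the bookkeeping in Case~1, namely verifying that for $\ell=\lfloor k\sqrt{p}\,\log p\rfloor$ the term $\sqrt{p\log p}/\ell$ dominates the other two and produces the rate $(k\sqrt{\log p})^{-1}$, since this is what forces $k\mu\to 0$ and hence the $\delta_{2k}<1/9$ conclusion; everything else is a direct transcription of the argument for Theorem~\ref{measurementgrows}. The single genuinely new ingredient is the choice of $\ell$, dictated by balancing ``$\ell$ large enough that the Gauss-sum cancellation beats the factor $k$'' against ``$\ell\le p$,'' the latter being guaranteed exactly by the hypothesis $k\le \sqrt{p}/\log p$.
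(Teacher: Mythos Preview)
Your proposal is correct and follows essentially the same route as the paper's proof: bound the coherence of $\tfrac{1}{\sqrt{\ell}}\bar{\Phi}_{\ell}$ via the two-case Weyl estimate (quadratic phase when $r_1\ne r_2$, linear phase when $r_1=r_2$), then pass to the RIP constant by $\delta_{2k}\lesssim k\mu$ and use $k\le\sqrt{p}/\log p$ to force this below $1/9$ for large $p$. The only cosmetic difference is that the paper keeps all four terms $\max\{1/\sqrt{p},\,1/(\sqrt{k}\,p^{1/4}),\,1/(k\sqrt{\log p}),\,1/(k\log p)\}$ explicitly and bounds each after multiplying by $k$, whereas you identify $(k\sqrt{\log p})^{-1}$ as dominant up front; your domination check is exactly where the hypothesis $k\le\sqrt{p}/\log p$ enters, so the arguments coincide.
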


\begin{proof} By doing a similar calculation to the one given in the proof of Theorem \ref{measurementgrows}, and using $\ell=\lfloor k \sqrt{p} \log p \rfloor$, the equations \eqref{inner} and \eqref{inner2} will be replaced by $$  |\langle v_a , v_b \rangle |  \lesssim \frac{1}{k \sqrt{p} \log p} \Big(  k \log p+ \sqrt{k} \sqrt[4]{p} \log p +\sqrt{p \log p}\Big) $$ and $$|\langle v_a, v_b \rangle | \lesssim \frac{\sqrt{p}}{ k \sqrt{p} \log p} $$ respectively. This implies  $$ |\langle v_a, v_b \rangle| \lesssim \max\{\frac{1}{\sqrt{p}}, \frac{1}{\sqrt{k} \sqrt[4]{p}}, \frac{1}{k \sqrt{ \log p}}, \frac{1}{k \log p} \}   $$Therefore there exists a prime number $p_1$ such that for $p \geq p_1$, the \textit{RIP} constant of the matrix $\Phi_{\ell}$ satisfies $$\delta_k < k \mu \lesssim \max\{ \frac{k}{\sqrt{p}}, \frac{\sqrt{k}}{\sqrt[4]{p}}, \frac{1}{\sqrt{\log p}}, \frac{1}{\log p} \} \lesssim \frac{1}{\sqrt{\log p}} < 1/9 $$ where we used the assumption on the sparsity level $k \leq \frac{\sqrt{p}}{\log p}$. \end{proof}

 \noindent Similar to what we observed in Section \ref{pgrows}, we state a corollary regarding the bound on the error term when the matrix $U\bar{\Phi}$ is used as the measurement matrix, and one-stage recovery scheme is used to reconstruct $x$. To match this corollary with the similar results, where we had a decreasing function for the error term, we consider the error as a function of $k'=1/k$. Note that we expect the error term to decay as we \textit{decrease} the value of $k$, i.e., as we \textit{increase} the value of $k'$.

 \begin{cor}

   There exists a prime number $p_0$ such that for a fixed prime number $p$ with $p \geq p_0$, any $k$-sparse signal $x$ can be approximated with  the vector $\hat{x}$, the solution to \eqref{onestagesol}, provided that the following holds. 
   \begin{enumerate}
   
 \item The sparsity level satisfies $k \leq \lfloor k_{\max}:=\frac{\sqrt{p}}{\log p} \rfloor $.  

\item The measurement matrix is $U \bar{\Phi}$, with $\bar{\Phi}$ defined as in Definition  \ref{barphi}; 

\item $q$ is obtained by quantizing $U \bar{\Phi} x$ using an $r$th order $\Sigma \Delta$ scheme -- as in \eqref{sigmadefi}.

\end{enumerate} 
\noindent The error in approximation satisfies

%  in approximation among all $k_{max}$-sparse signals, $e(2)$ denote the maximum (or average) error in approximation among all $(k_{\max}-1)$-sparse signals,...., and $e(k_{\max})$ denote the maximum (or average) error in approximation among all 1-sparse signals. Then, we have   

  \begin{equation} \label{pfixed} \|x-\hat{x}\|_2 \leq \Bigg( 2 C_r C_4  (3 \pi r)^r(\frac{\sqrt{p}}{\log p})^{-r+1/2} \Bigg) (k')^{-r+1/2} \end{equation} assuming that no noise is present. Here, $k'=1/k$, and the constant $C_4$ only depends on RIP constant of $\Phi$.

\end{cor}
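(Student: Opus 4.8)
The plan is to run exactly the argument that was used for Corollary~\ref{cor11}, but now feeding in Theorem~\ref{p1def} in place of Theorem~\ref{measurementgrows}. First I would take $p_0$ to be the prime $p_1$ furnished by Theorem~\ref{p1def} (enlarged, if necessary, so that $\sqrt{p}\log p>2$ and all the floors below are positive). Fix a prime $p\ge p_0$ and a sparsity level $k$ with $1\le k\le\lfloor\sqrt{p}/\log p\rfloor$, and set $\ell:=\lfloor k\sqrt{p}\log p\rfloor$. Theorem~\ref{p1def} then guarantees that $\bar{\Phi}$ satisfies \textit{(P1)} of order $(k,\ell)$. Moreover, the cap $k\le\sqrt{p}/\log p$ gives $\ell\le k\sqrt{p}\log p\le p$, while $\ell\ge k\sqrt{p}\log p-1>k$, so the admissibility conditions $k<\ell\le m$ needed by Theorem~\ref{recoveryp1} hold with $m=p$.

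Next I would invoke Theorem~\ref{recoveryp1} with $\Phi=\bar{\Phi}$ and measurements $U\bar{\Phi}x$ quantized by the $r$th-order $\Sigma\Delta$ scheme. Since $x\in\Sigma_k^n$ we have $\sigma_k(x)=0$, and since no noise is present $\epsilon=0$, so only the first term of \eqref{lm6} survives. Tracking the constant through that bound exactly as in the proof of Theorem~\ref{withmsq} --- applying Lemma~\ref{hatx} with $p=2$, using $\|\hat{x}\|_1\le\|x\|_1$, and invoking the lower bound on $\sigma_\ell(H^\dagger)$ from (22) of \cite{rongrong}, which is the source of the $(3\pi r)^r$ factor (and taking $C_r=1/2$) --- one obtains
\[
\|x-\hat{x}\|_2\ \le\ 2C_rC_4(3\pi r)^r\,\delta\,\Big(\frac{p}{\ell}\Big)^{-r+1/2}.
\]

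Finally I would substitute the value of $\ell$. From $\ell\le k\sqrt{p}\log p$ we get $p/\ell\ge\sqrt{p}/(k\log p)$, and since the exponent $-r+1/2$ is negative this yields
\[
\Big(\frac{p}{\ell}\Big)^{-r+1/2}\ \le\ \Big(\frac{\sqrt{p}}{k\log p}\Big)^{-r+1/2}\ =\ \Big(\frac{\sqrt{p}}{\log p}\Big)^{-r+1/2}k^{\,r-1/2}.
\]
Writing $k'=1/k$ so that $k^{\,r-1/2}=(k')^{-r+1/2}$, and collecting constants, gives precisely the bound \eqref{pfixed}.

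The argument introduces essentially no new difficulty: all the substance lives in Theorem~\ref{p1def} (the coherence estimate obtained from the Weyl exponential-sum bound, Theorem~\ref{weylquadratic}) and in Theorem~\ref{recoveryp1}, both already proved. The only points that require care are the bookkeeping of the explicit constant $2C_rC_4(3\pi r)^r$ as it propagates through the proof of Theorem~\ref{recoveryp1}, and the verification that $\ell=\lfloor k\sqrt{p}\log p\rfloor\le m=p$ --- which is exactly where the hypothesis $k\le\sqrt{p}/\log p$ is needed. Note also that, in contrast with Section~\ref{pgrows}, here $p$ is held fixed while $\ell$ depends on $k$, so the resulting estimate is genuinely a decreasing function of $k'=1/k$, as claimed.
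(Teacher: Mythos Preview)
Your proof is correct and follows essentially the same approach as the paper: invoke Theorem~\ref{p1def} to obtain property \textit{(P1)} with $\ell=\lfloor k\sqrt{p}\log p\rfloor$, plug $m=p$ and this $\ell$ into the bound \eqref{lm6} of Theorem~\ref{recoveryp1}, use $\sigma_k(x)=0$ and $\epsilon=0$, and simplify via $p/\ell\ge \sqrt{p}/(k\log p)$. Your write-up is in fact more careful than the paper's (which mistakenly cites Theorem~\ref{measurementgrows} for the prime $p_0$ and skips the verification that $k<\ell\le p$), and both versions silently drop the factor~$\delta$ that appears in \eqref{lm6} when stating \eqref{pfixed}.
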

\label{aboutksparse}
 
\begin{proof} Let $x$ be a $k$-sparse signal, and $\hat{x}$ be the approximation vector. Also,  let $p_0$ be the prime number given by Theorem \ref{measurementgrows}. Replace the value of $m=p$ and $\ell=\lfloor k \sqrt{p} \log p \rfloor \leq k \sqrt{p} \log p $ into \eqref{lm6} and use the fact that $\sigma_k(x)=0$ for a $k$-sparse signal to conclude

$$\|x-\hat{x} \|_2 \leq \Bigg( 2 C_r C_4  (3 \pi r)^r(\frac{\sqrt{p}}{\log p})^{-r+1/2} \Bigg) (k)^{r-1/2} $$ as desired. 

 \end{proof}

\subsection*{Numerical experiments}

In this section, we verify the results we obtained in Sections \ref{pgrows} and \ref{kvaries}. We run two numerical experiments. In the first experiment, we consider prime numbers $p=$ 61, 137, 223, 307, 397, 487, 593, 677, 787, and  for each prime $p$, we draw 20 signals, each of which is a 4-sparse signal with a random support chosen from the set $\{1,2,\cdots, 61 \big\lfloor \sqrt{61} \big\rfloor\} $, and whose entries are chosen independently from a standard Gaussian distribution. In other words, the actual ambient dimension of signals that are considered is $61 \big\lfloor \sqrt{61} \big\rfloor=427$. For each such signal, we compute the CS measurements $y=U\bar{\Phi}$ which we subsequently quantize using a stable $r$th-order $\Sigma \Delta$ scheme to obtain $q$ with $r=1$ or $r=2$. Next, we reconstruct an approximation $\hat{x}$ of $x$ using  \eqref{onestagesol} where we set $\Phi=U\bar{\Phi}$, $\delta=0.1$, $r=1, 2$, and $\epsilon=0$. Finally, for each $p$, we compute the average $\|x-\hat{x}\|_2$.  We plot the average error as a function of $p$ in log-log scale in  Figure~\ref{componestage5}.  As mentioned  in Section \ref{pgrows}, for 4-sparse signals, we expect the bound on the error in approximation to behave like $p^{-(r-1/2)}$ at least  for $61\leq p \leq 61^{1.3}$. Figure~\ref{componestage5} confirms this fact and shows the $p^{-(r-1/2)}$ behaviour even for $p$ values beyond this range.  

% we consider 20 signals. Each of these signals is a 4-sparse signal with a random support and entries chosen from a normal Gaussian random variable. For each signal, we compute the first order one-stage reconstruction vector $\hat{x}$ using \eqref{onestagesol} (with $r=1$), and we compute the error in approximation $\|x-\hat{x} \|_2$. Then we define the error in approximation associated to the prime number $p$ as the average of all these errors for the 20 signals we considered.  We repeat this experiment for the second order one-stage $\Sigma \Delta$ quantization by using \eqref{onestagesol} with $r=2$. The graphs of both of these experiments as well as the graphs of upper bounds for the error we obtained theoretically in Section \ref{pgrows} have been plotted in Figure \ref{componestage5}.  \\
%
 In the second experiment, we fixed the number of measurements to be $p=541$, and we considered $k$-sparse signals with $3 \leq k \leq 15$. Then for each $k'=\frac{1}{k}$, we consider 50 signals which are $k$-sparse and have a random support $T \subseteq \{1,2,...,1400\}$ and have entries chosen independently from the standard Gaussian distribution. For each of these signals, the reconstruction vector $\hat{x}$ is obtained from \eqref{onestagesol} with $r=1$ or $r=2$.   We average  over all the errors for each value of $k$, and we plot the graph of average errors as well as the upper bounds on the error obtained in Section \ref{kvaries} in log-log scale in Figure \ref{componestage6}.

\begin{figure}
    
    %\hspace{-1.5cm}
\begin{center}
    \includegraphics[width=0.35\textwidth]{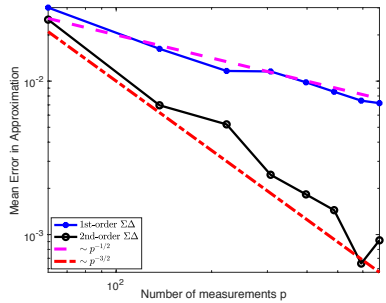}
\end{center}
\caption{ \label{componestage5} Error in approximation using  first order and second order $\Sigma \Delta$ quantization with one-stage reconstruction scheme for a 4-sparse signal and the comparison with the graphs of $f(p)=\frac{C}{\sqrt{p}}$ and $g(p)=\frac{D}{ p^{3/2}}$ (each one shifted properly to match the original graphs) in log-log scale.  }

\end{figure}

\begin{figure}
\begin{center}
    %\hspace{-1.5cm}
    \includegraphics[width=0.35\textwidth ]{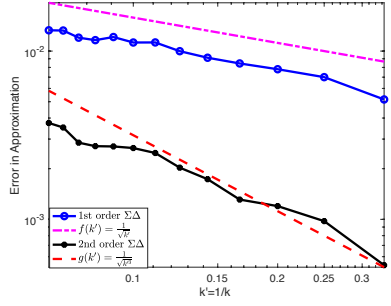}

\caption{\label{componestage6}Error in approximation using first order and second order $\Sigma \Delta$ quantization with one-stage reconstruction scheme with fixed number of measurements ($p=541$) and the comparison with the graphs of $f(k')=\frac{1}{\sqrt{k'}}$ and $g(k')=\frac{1}{\sqrt{k'^3}}$. }
\end{center}

\end{figure}

\section{Further encoding of $\Sigma\Delta$-quantized compressive measurements}

In one-stage recovery of $\Sigma \Delta$ quantized measurements, we start with a measurement vector $y$ and since we have to store/transmit data we quantize this vector using an alphabet $\mathcal{A}$ to obtain a quantized vector $q \in \mathcal{A}^m$. To encode $q$, we need $\log_{2} |\mathcal{A}|^m=m \log_{2} |\mathcal{A}| $ bits. In \cite{encoding}, Saab et al. proposed a method to encode using much less number of bits without affecting the error in reconstruction significantly. In the following, we give a brief review about their result. 

 In a nutshell, they reduce the dimension of $q$ to encode using less number of bits. In particular, suppose that $L \leq m$, and consider the encoder $\mathcal{E}: \mathcal{A}^m \to \mathcal{C}$ defined as $\mathcal{E}(q)=BD^{-r} q$. where $B$ is an $L \times m$ Bernoulli matrix with i.i.d. equiprobable entries. 

First, we find how many bits we are saving by using this encoder. We consider the alphabet $\mathcal{A}_{\delta}^K:=\{-K \delta,...,-\delta,\delta,...,K \delta\}$. Since $\|D^{-r}\|_{\infty} \leq m^r$, and $\|B\| \leq m$ \cite{encoding}, we obtain $\|BD^{-r} q \|_{\infty} \leq m^{r+1}\|q\|_{\infty} \leq m^{r+1} K \delta$. Thus, for each entry of $\mathcal{E}$ we need an alphabet of the form $$\mathcal{A}'=\mathcal{A}_{\delta}^{Km^{r+1}}$$ There are $L$ such entries, so in total we should use $$L\log_{2} |\mathcal{A}'|=L (r+1) \log_2 m + L \log_2 2K$$ bits to represent $\mathcal{E}(q)$. Thus, by enlarging the size of alphabet and reducing the dimension, Saab et al. \cite{encoding} reduced the number of bits because the size of alphabet appears only as logarithmic factor. 

Now, the goal is to find an algorithm to reconstruct $x$ with the vector $\hat{x}$ given the encoded vector $\tilde{q}=\mathcal{E}(q)=BD^{-r}q$, and with $\|x-\hat{x}\|_2$ to be as small as possible. This algorithm is given in \cite{encoding} as follows. \begin{equation} \label{encodingalgorithm} \begin{aligned} (\hat{x}, \hat{u}, \hat{e})  =\mbox{argmin} \|\tilde{x}\|_1 & \mbox{  \ subject to \  } BD^{-r} (\Phi \tilde{x}+\tilde{e}) -B \tilde{u} =BD^{-r} q   \\ & \mbox{ \ and } \|B \tilde{u} \|_2 \leq 3Cm  \mbox{ \ and \ }  \|\tilde{e} \| \leq \sqrt{m} \epsilon \end{aligned} \end{equation} 

%Next theorem will guarantee that $\|x-\hat{x}\|_2 $ is small for the above $\hat{x}$., we will see how we reconstruct the original signal $x$ 

Next, we prove that this algorithm can be applied using the measurement matrix defined in Definition \ref{barphi}. In order to do so, first we choose a Bernoulli matrix $B$ of the size $L \times p$ with $L=\lfloor p^{5/8} \log^2 p \rfloor$ and consider the $p \times p$ matrix $D^{-r}$. Then, write the singular value decomposition of $BD^{-r}$ in the form $BD^{-r}=TSR^{T}$. Using this notation, we prove the following theorem. 

\begin{Theor}

Consider a $k$-sparse signal $x \in \mathbb{R}^n$, with $k \leq \lfloor \sqrt[8]{p} \log p \rfloor$.  Suppose that we use the matrix $R \bar{\Phi} $ as the measurement matrix, where $R$ is as above, and $\bar{\Phi}$ is the matrix given in Definition \ref{barphi}, to find the measurement vector $y$. Then, we use the $r$th order $\Sigma \Delta$ quantization to obtain the quantized vector $q$. Next, find the reconstruction vector $\hat{x}$ via \eqref{encodingalgorithm}. The error in reconstruction satisfies  $$\|x-\hat{x} \| \leq C_1 \Big( \frac{ \log p}{\sqrt{p}} )^{r/2-3/4} +C_2\sqrt{\frac{\sqrt{p}}{\log p}} \epsilon +C_3\frac{\sigma_k(x)_1}{\sqrt{k} }   $$ with probability at least $  1-C_5 e^{-c_6 p^{11/16} \log p}    $ for some constants $C_1,C_2,C_3,C_5$, and $c_6$. 
\end{Theor}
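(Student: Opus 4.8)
The plan is to combine three ingredients already developed in the paper: (i) the coherence estimate for the chirp submatrix $\bar{\Phi}$ coming from Weyl's exponential-sum bound, exactly as in the proofs of Theorems~\ref{measurementgrows} and~\ref{p1def}; (ii) the mechanism of Theorem~\ref{recoveryp1}, namely that pre-multiplying a matrix satisfying \textit{(P1)} by the orthogonal factor of an SVD does not destroy recovery because that factor cancels in the residual identity; and (iii) the encoder $\mathcal{E}(q)=BD^{-r}q$ together with the decoder \eqref{encodingalgorithm} of \cite{encoding}, whose guarantee needs an RIP-type input on a restriction of the measurement matrix plus control of the conditioning of $BD^{-r}$. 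In the present setup the right singular factor $R$ of $BD^{-r}=TSR^{T}$ plays exactly the role that $U$ played in Section~\ref{modified2}, so that only $\bar{\Phi}$ — not $BD^{-r}\bar{\Phi}$ — has to satisfy \textit{(P1)}; the Bernoulli matrix $B$ is the sole source of randomness, hence of the probabilistic conclusion.

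\textbf{Step 1 (deterministic input).} First I would show that for $p$ large and $k\le\lfloor\sqrt[8]{p}\log p\rfloor$ the chirp submatrix $\bar{\Phi}$ of Definition~\ref{barphi} satisfies \textit{(P1)} of order $(k,\ell)$ with $\ell=\lfloor k\sqrt{p}\log p\rfloor$ as in Theorem~\ref{p1def}; note that with $L=\lfloor p^{5/8}\log^{2}p\rfloor$ one has $\ell\le L$, so a restriction of $\bar{\Phi}$ on which it is well conditioned sits inside the first $L$ rows. This is the very computation of Theorem~\ref{p1def}: the inner product of two distinct columns of $\frac{1}{\sqrt{\ell}}\bar{\Phi}_{\ell}$ is $\frac{1}{\ell}|S(r_{2}-r_{1},m_{2}-m_{1},p,\ell)|$, bounded by part (a) of Theorem~\ref{weylquadratic} when $r_{1}\neq r_{2}$ and by part (b) when $r_{1}=r_{2}$ (using $\|\beta\|\ge c/\sqrt{p}$, which is forced by the restricted set of $m$-values in Definition~\ref{barphi}); this yields a coherence bound $\mu\lesssim 1/(k\sqrt{\log p})$, hence $\delta_{2k}\le 2k\mu<1/9$. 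Consequently $\frac{1}{\sqrt{\ell}}\bar{\Phi}_{\ell}$ — and, after the cancellation of Step 3, the corresponding restriction of the actual measurement matrix $R\bar{\Phi}$ — is admissible for Lemma~\ref{hatx}.

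\textbf{Step 2 (probabilistic input) and Step 3 (assembling the bound).} Next I would record the events on $B\in\{\pm1\}^{L\times p}$, $L=\lfloor p^{5/8}\log^{2}p\rfloor$, that the decoder of \cite{encoding} requires: a bound $\|B\|_{\mathrm{op}}\lesssim\sqrt{p}$; a lower bound on the smallest nonzero singular value of $BD^{-r}$ (combining $\sigma_{\min}(B)\gtrsim\sqrt{p}$ with the singular-value data of $D^{-r}$, estimated via Weyl's inequalities as in \cite{rongrong}); and the restricted near-isometry property of $\frac{1}{\sqrt{p}}B$ on the family of supports used in \cite{encoding}. With the present $L$, each of these fails with probability at most $C_{5}e^{-c_{6}p^{11/16}\log p}$; this is the only source of the probability in the statement. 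On the complementary event, since $BD^{-r}(R\bar{\Phi})=TSR^{T}R\bar{\Phi}=TS\bar{\Phi}$, the factor $R^{T}R=I$ cancels in the identity obtained from two feasible triples of \eqref{encodingalgorithm} — precisely as $U^{T}U=I$ in the proof of Theorem~\ref{recoveryp1} — so that, running the error analysis of \cite{encoding} through this cancellation, the only matrix whose RIP is used is $\frac{1}{\sqrt{\ell}}\bar{\Phi}_{\ell}$ from Step 1, while the conditioning of $BD^{-r}$ from Step 2 controls the loss from replacing the $p$ measurements by the $L\ll p$ encoded numbers. Feeding in the $\sd$ stability bound $\|D^{-r}(R\bar{\Phi}x-q)\|_{2}\le C_{r}\delta\sqrt{p}$, the noise bound $\|\eta\|_{2}\le\sqrt{p}\,\epsilon$, the constraints of \eqref{encodingalgorithm}, and Lemma~\ref{hatx} with $\|\hat{x}\|_{1}\le\|x\|_{1}$, and then substituting $m=p$, $L=\lfloor p^{5/8}\log^{2}p\rfloor$ and bounding $\sqrt{m/\ell}\le\sqrt{\sqrt{p}/\log p}$ (valid since $k\ge1$), one arrives at the claimed estimate. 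As in Corollary~\ref{cor11}, the $\sd$-error exponent is worse than in the sub-Gaussian case because the ambient dimension grows together with $p$ and because the encoding keeps only $L\ll p$ numbers; the bound is informative for $r\ge2$.

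\textbf{The main obstacle} is Step 3. The analysis of \cite{encoding} was written for sub-Gaussian $\Phi$, for which a suitably normalized $BD^{-r}\Phi$ is itself an RIP matrix; for the deterministic chirp matrix this is false, so one must re-run that argument so that its sole RIP ingredient is the coherence bound of Step 1, check that the $R$-rotation does not spoil the chirp coherence estimate (this is the analogue of the obstruction described in Section~\ref{genA} for $U^{T}\Phi$ versus $\Phi$), and then carry out the bookkeeping that converts the order-$r$ $\sd$ rate into the exponent $r/2-3/4$ and produces the failure-probability exponent $p^{11/16}\log p$ for the chosen $L$.
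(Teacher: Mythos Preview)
Your plan coincides with the paper's proof in its essential pieces: the Weyl-based coherence bound for $\bar{\Phi}$, the cancellation $R^{T}R=I$ so that only $\bar{\Phi}$ (not $BD^{-r}\bar{\Phi}$) needs an RIP, the probabilistic control of $B$ via the events from \cite{encoding}, and Lemma~\ref{hatx} to convert the residual bound into $\|x-\hat{x}\|_{2}$. Two points are handled more directly in the paper than in your outline. First, rather than invoking Theorem~\ref{p1def} at the smaller level $\ell=\lfloor k\sqrt{p}\log p\rfloor$, the paper proves the RIP for $\frac{1}{\sqrt{L}}\bar{\Phi}_{L}$ directly at $L=\lfloor p^{5/8}\log^{2}p\rfloor$, since $L$ is the cutoff that the encoder actually produces; the same Weyl computation gives coherence $\mu\lesssim p^{-1/8}\log^{-3/2}p$, hence $\delta_{k}<1/9$ under $k\le p^{1/8}\log p$. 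Second, the obstacle you flag does not materialize: the paper simply cites from \cite{encoding} a bound of the form $\frac{1}{\sqrt{L}}\|\tilde{\Phi}_{L}(x-\hat{x})\|_{2}\le 6C(L/m)^{r/2-3/4}+2\sqrt{m/L}\,\epsilon$ that holds on the event $\mathcal{E}=\mathcal{E}_{1}\cap\mathcal{E}_{2}$ (only the singular-value lower bound on $BD^{-r}$ and the operator-norm bound on $B$ --- no additional RIP for $B$) for \emph{any} measurement matrix $\Phi$ once one sets $\tilde{\Phi}=R^{T}\Phi$. Since here $\tilde{\Phi}=R^{T}(R\bar{\Phi})=\bar{\Phi}$, this plugs straight into Lemma~\ref{hatx} without any re-derivation of the argument in \cite{encoding}.
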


\noindent Note that if we want to have decreasing bound (as a function of $p$) for the error in approximation in the noise-free case, we need to have $r/2-3/4>0$. This means we must have $ r \geq 2$.

\begin{proof} First, let $L=\lfloor p^{5/8} \log^2 p \rfloor$, and we verify that  $\frac{1}{\sqrt{L}} \bar{\Phi}_{L} $ satisfies the RIP with  $\delta_{2k} < 1/9$, if $k \leq \sqrt[4]{p} \log p$. To that end,  we use \eqref{vavb} along with Theorem \ref{weylquadratic}, to conclude that $$ \begin{aligned} |\langle v_a,v_b \rangle| & \lesssim \frac{1}{p^{5/8}  \log^2 p} \Big(  p^{1/8} \log^2 p + p^{5/16} \log^{3/2} p+ \sqrt{p \log p}  \Big) \\ &  \lesssim \frac{\sqrt{p \log p} }{p^{5/8} \log^2 p}=\frac{1}{p^{1/8}\log^{3/2} p } \end{aligned}  $$

\noindent Hence, the coherence of $\frac{1}{\sqrt{L}} \bar{\Phi}_{L} $ satisfies $ \mu \lesssim \frac{1}{p^{1/8} \log^{3/2} p}$, and this implies that the RIP constant satisfies $$\delta_k < k \mu \leq \frac{p^{1/8} \log p}{p^{1/8} \log^{3/2} p }<1/9$$ for large enough $p$. 

 Similar to the what mentioned in the proof of Theorem \ref{withmsq}, if we use $p=2$ in Proposition \ref{hatx}, and the value of $L$ as stated above, since $\frac{1}{\sqrt{L}} \bar{\Phi}_L$ satisfies the RIP with $\delta_{2k}<1/9$, we can conclude that  \begin{equation} \label{c1} \|x-\hat{x}\|_2 \leq \frac{d_1}{\sqrt{L}} \| \bar{\Phi}_L (x-\hat{x}) \|_2 +d_2\frac{\sigma_k(x)_1}{\sqrt{k}} \end{equation} 

\noindent for some constants $d_1$ and $d_2$. Next,  we find an upper bound for $\|\frac{1}{\sqrt{L}} \bar{\Phi}_L (x-\hat{x}) \|_2$. To do that, we consider the set $\mathcal{E}=\mathcal{E}_1 \cap \mathcal{E}_2$ where $$\mathcal{E}_1:= \{ B \in Bern(L,m): \sigma_L(BD^{-r}) \geq \Big( \frac{m}{L} \Big)^{r/2-1/4} \sqrt{m} \}, $$ and $$\mathcal{E}_2:= \{ B \in Bern(L,m) : \|B\|_{ \ell^2 \to \ell^2} \leq \sqrt{L} +2 \sqrt{m} \}$$ It is shown in \cite{encoding} that \begin{equation} \label{probepsilon} P(\mathcal{E}) \geq 1- 2e^{-c_1 \sqrt{mL}}-\beta  e^{ -c_2 L}, \end{equation} for some constants $\beta, c_1$, and $c_2$. It is also shown that for any $B \in \mathcal{E}$, if we decompose $BD^{-r}$ in the form $BD^{-r}=TSR^T$, and if we set $\tilde{\Phi}=R^T \Phi$ (here, $\Phi$ is the measurement matrix, and in our case, $\Phi=R \bar{\Phi}$ with $\bar{\Phi}$ as given in Definition \ref{barphi}, and so $\tilde{\Phi}=R^T(R\bar{\Phi})=\bar{\Phi}$), then we have  

 $$\frac{1}{\sqrt{L}} \| \tilde{\Phi}_L (x-\hat{x}) \|_2 \leq 6C \Big( \frac{L}{m} \Big) ^{r/2-3/4} + 2\sqrt{\frac{m}{L}} \epsilon $$ for a constant $C$.  Hence, using the value of $L$ as given above, we obtain  
 
% Here, $\tilde{\Phi}$ is obtained from multiplying the matrix $R$ (appeared in the singular value decomposition of $BD^{-r}$) with the measurement matrix. Now, as we are using $R^T \bar{\Phi}$ as the measurement matrix, we have $\tilde{\Phi}=R (R^T \bar{\Phi})=\bar{\Phi}$. 
% 

\begin{equation} \label{c2} \|\frac{1}{\sqrt{L}} \bar{\Phi}_L (x-\hat{x}) \|_2    \leq  6C \Big( \frac{ \log^2 p}{\sqrt[8]{p}} \Big)^{r/2-3/4} +2\sqrt{\frac{\sqrt[8]{p}}{\log^2 p}} \epsilon \end{equation}

\noindent  Accordingly, by combining \eqref{c1} and \eqref{c2}, we obtain $$\|x-\hat{x}\|_2 \leq C_1 \Big( \frac{ \log^2 p}{\sqrt[8]{p}} )^{r/2-3/4} +C_2 \sqrt{\frac{\sqrt[8]{p}}{\log^2 p}} \epsilon +C_3 \frac{\sigma_k(x)_1}{\sqrt{k}}$$

\noindent Noting that $m=\lceil p^{3/4} \rceil $, and $L=\lceil p^{5/8} \log^2 p \rceil $, we conclude that $e^{-c_1 \sqrt{mL}} \lesssim e^{-c_2 L} $, which implies that $1-e^{-c_2L} \lesssim 1-e^{-c_1\sqrt{mL}}$.  Therefore, by \eqref{probepsilon}, the inequality above holds with probability at least $1-C_5 e^{-c_6 \sqrt{mL}}$, i.e., $1-C_5 e^{-c_6 p^{11/16} \log p }$,  for some constants $C_1, C_2, C_3, C_5$, and $c_6$. 

\end{proof}

\noindent

\noindent Lastly, the following result holds regarding the worst case reconstruction error, i.e., the distortion $\mathcal{D}$ as defined in (1) of \cite{encoding}. The derivation is similar to derivation of (iii) in Corollary 14 from Theorem 12 in \cite{encoding} and is omitted here.

%Therefore, based on what we summarized from \cite{encoding} in Section \ref{furtherencoding}, the following corollary holds regarding the exponential accuracy. Here,  $R^T \bar{\Phi}$, with $\bar{\Phi}$ as defined in Definition \ref{barphi} is used as the measurement matrix, the measurements are quantized using $r$th order $\Sigma \Delta$ quantization, and further encoding is performed at the end. 

\begin{cor}
There exist  constants $C_0, C_2$ such that in the noise-free case, and for $k :=\lfloor C_0 p^{5/8} \log p \rfloor $, the distortion rate $\mathcal{D}$ in the case of. $k$-sparse signals satisfies $$\mathcal{D} \lesssim 2^{-C_2 \frac{\mathcal{R}}{k \log p}}$$

\noindent where $\mathcal{R}$ is the bit rate defined as $\mathcal{R} := \log |\mathcal{C}|$.

\end{cor}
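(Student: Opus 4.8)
The plan is to derive the distortion-rate bound by combining the reconstruction error estimate from the preceding theorem with the bit-count estimate developed earlier in this section. First I would recall that, from the encoding discussion, representing the encoded vector $\mathcal{E}(q) = B D^{-r} q$ requires a bit rate $\mathcal{R} = L \log_2 |\mathcal{A}'|$ with $\mathcal{A}' = \mathcal{A}_\delta^{K m^{r+1}}$, so that $\mathcal{R} \asymp L (r+1) \log_2 m + L \log_2(2K)$; since $L = \lfloor p^{5/8} \log^2 p \rfloor$ and $m = \lceil p^{3/4} \rceil$, this gives $\mathcal{R} \lesssim L \log p \asymp p^{5/8} \log^3 p$ (absorbing the dependence on $r$ and $K$ into constants). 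Inverting this relation lets me express $\delta$, or equivalently $p$, as a function of $\mathcal{R}$ and $k$.

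Next I would feed this into the noise-free ($\epsilon = 0$, $\sigma_k(x) = 0$) version of the error bound from the theorem just proved, namely $\|x - \hat{x}\| \lesssim (\log p / \sqrt{p})^{r/2 - 3/4} \cdot \delta$-type behavior — more precisely, following the analysis in \cite{encoding}, the worst-case distortion $\mathcal{D}$ over $k$-sparse signals is controlled by the quantity appearing in the reconstruction guarantee, which for $k = \lfloor C_0 p^{5/8} \log p \rfloor$ scales like a fixed negative power of $L/m$. The key algebraic step is to observe that $\mathcal{R} / (k \log p) \asymp L / k \asymp p^{5/8} \log^2 p / (p^{5/8} \log p) \asymp \log p$, so that $p$ grows like $2^{c \mathcal{R}/(k \log p)}$ for an appropriate constant; substituting this growth rate of $p$ into the polynomial-in-$p$ decay of $\mathcal{D}$ converts it into exponential decay in $\mathcal{R}/(k \log p)$, yielding $\mathcal{D} \lesssim 2^{-C_2 \mathcal{R}/(k \log p)}$.

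I would structure this exactly as the paper suggests: state that the argument mirrors the derivation of part (iii) of Corollary 14 from Theorem 12 in \cite{encoding}, with the only modification being that the RIP/coherence input for $\bar{\Phi}$ comes from the Weyl-sum estimate (Theorem \ref{weylquadratic}) as used in the preceding theorem, rather than from a sub-Gaussian concentration bound. The main obstacle — and the reason the paper chooses to omit the details — is bookkeeping the interlocking polynomial exponents in $p$: one must track how $L$, $m$, $k$, the alphabet size $K m^{r+1}$, and the singular-value lower bound $\sigma_L(BD^{-r}) \geq (m/L)^{r/2-1/4}\sqrt{m}$ all scale, and verify that the high-probability event $\mathcal{E}$ from \eqref{probepsilon} has probability tending to $1$ at the claimed rate $1 - C_5 e^{-c_6 p^{11/16}\log p}$ when $L$ and $m$ take these specific values. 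Once those exponent computations are in place, the passage from polynomial decay in $p$ to exponential decay in the normalized rate $\mathcal{R}/(k\log p)$ is immediate, so I would present that final substitution explicitly and refer the reader to \cite{encoding} for the structural part of the argument.
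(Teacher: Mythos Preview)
Your proposal is correct and aligned with the paper's approach: the paper omits the proof entirely, simply stating that the derivation mirrors that of part~(iii) of Corollary~14 from Theorem~12 in \cite{encoding}, and your sketch fills in exactly the bookkeeping one would carry out to make that citation precise---computing $\mathcal{R} \asymp L\log p$, observing $\mathcal{R}/(k\log p) \asymp \log p$ with the stated choices of $L$ and $k$, and then converting the polynomial-in-$p$ decay of the preceding theorem's error bound into exponential decay in the normalized rate. One minor slip: the phrase ``express $\delta$, or equivalently $p$'' is off---$\delta$ stays fixed throughout and is absorbed into constants; it is only $p$ that you invert against $\mathcal{R}$.
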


\section{Conclusion}

In today's digital world, quantizing the measurement vector is a crucial step in the sampling process, which was mostly ignored in early literature of CS. One known efficient method of quantization in CS is a method called $r$th-order $\Sigma \Delta$ quantization, which was accompanied with a one-stage reconstruction method. This method was shown to be robust respect to noise and stable respect to compressible signals, but came with one caveat: it was applied only for the class of sub-Gaussian matrices. In this paper, we proposed two novel approaches to generalize this method to random restrictions of bounded orthonormal systems, such as random restrictions of DFT matrices (which are of high importance due to the applications in MRI). We also generalized this method to certain class of deterministic measurement matrices, namely, certain submatrices of chirp sensing matrices. For each of these cases, we provided numerical experiments confirming the bounds derived for the errors in approximation.

\bibliographystyle{plain}
\bibliography{one_stage_01}

\begin{thebibliography}{10}

\bibitem{applebaum}
L.~Applebaum, S.~D. Howard, S.~Searle, and R.~Calderbank.
\newblock Chirp sensing codes: Deterministic compressed sensing for fast
  recovery.
\newblock {\em Applied and Computational Harmonic Analysis}, 26(2):283--290,
  2009.

\bibitem{bpy1}
J.~Benedetto, A.~Powell, and O.~Yilmaz.
\newblock Sigma-delta quantization and finite frames.
\newblock {\em IEEE International Conference on Acoustics}, 52(5):1990--2005,
  2004.

\bibitem{bpy2}
J.~Benedetto, A.~Powell, and O.~Yilmaz.
\newblock Second-order sigma-delta quantization of finite frame expansions.
\newblock {\em Applied and Computational Harmonic Analysis}, 20(1), 2006.

\bibitem{bodmann}
B.~Bodmann and V.~Paulsen.
\newblock Frames, graphs, and erasures.
\newblock {\em Linear Algebra and its Applications}, 404, 2005.

\bibitem{krahmer3}
P.~Boufounos, L.~Jacques, F.~Krahmer, and R.~Saab.
\newblock {\em Quantization and compressed sensing, chapter in ``Compressed
  sensing and its applications" (edited by H. Boche, R. Calderbank, G.
  Kutyniuk, J. Vybiral)}.
\newblock Springer, 2015.

\bibitem{boufounos}
P.~T. Boufounos and R.~G. Baraniuk.
\newblock One-bit compressive sensing.
\newblock {\em Proc. of Conf. on Information Sciences and Systems}, March 2008.

\bibitem{tomographic}
E.~Candes, J.~Romberg, and T.~Tao.
\newblock Robust uncertainty principle: Exact signal reconstruction from highly
  incomplete frequency information.
\newblock {\em IEEE Transactions on Information Theory}, 52(2):489--509, 2006.

\bibitem{candes2006}
E.~Candes and T.~Tao.
\newblock Near-optimal signal recovery from random projections: universal
  encoding strategies ?
\newblock {\em IEEE Transactions on Information Theory}, 52:5406--5425, 2006.

\bibitem{tao}
E.~Candes and T.~Tao.
\newblock Decoding by linear programming.
\newblock {\em Information Theory, IEEE transactions on}, 51(12), December
  2005.

\bibitem{chou}
E.~Chou, S.~Gunturk, F.~Krahmer, R.~Saab, and O.~Yilmaz.
\newblock {\em Noise-shaping quantization methods for frame-based and
  compressive sampling systems, chapter 4 in "Sampling Theory, A Renaissance"
  (edited by G. Pfander)}.
\newblock Birkhauser, Boston, 2015.

\bibitem{donoho1}
D.~Donoho.
\newblock Compressed sensing.
\newblock {\em IEEE Transactions on Signal Processing}, 52(4):1289--1306, 2006.

\bibitem{foucart}
S.~Foucart.
\newblock {\em Sparse recovery algorithms: sufficient conditions in terms of
  restricted isometry constants. Chapter 5 in: Approximation theory XIII: San
  Antonio 2010}, volume~13.
\newblock Springer Proceedings in Mathematics, 2012.

\bibitem{foucartrobust}
S.~Foucart.
\newblock Stability and robustness of $\ell_1$ minimization with weibull
  matrices and redundtant dictionaries.
\newblock {\em Linear Algebra and its Applications}, 441(15):4--21, 2014.

\bibitem{foucart2}
S.~Foucart and H.~Rauhut.
\newblock {\em A mathematical introduction to compressive sensing}.
\newblock Birkhauser Verlag, 2013.

\bibitem{gunturk}
C.S. Gunturk, M.~Lammers, A.~Powell, R.~Saab, and O.~Yilmaz.
\newblock Sobolev duals for random frames and sigma-delta quantization of
  compressed sensing measurements.
\newblock {\em Foundation of Computational Mathematics}, 13(1):1--36, 2013.

\bibitem{inose}
H.~Inose and Y.~Yasuda.
\newblock A unity bit coding method by negative feedback.
\newblock {\em Proc. IEEE}, 51(11):1524--1535, 1963.

\bibitem{rbaraniuk}
L.~Jacques, J.~N. Laska, P.~T. Boufounos, and R.~G. Baraniuk.
\newblock Robust 1-bit compressive sensing via binary stable embeddings of
  sparse vectors.
\newblock {\em IEEE Transactions on Information Theory}, 59(4), April 2013.

\bibitem{krahmer}
F.~Krahmer, R.~Saab, and R.~Ward.
\newblock Root-exponential accuracy for coarse quantization of finite frame
  expansions.
\newblock {\em Information Theory, IEEE transactions on}, 58(2):1069--1079,
  2012.

\bibitem{krahmer-yilmaz}
F.~Krahmer, R.~Saab, and O.~Yilmaz.
\newblock Sigma-delta quantization of sub-gaussian frame expamsions and its
  application to compressed sensing.
\newblock {\em Information and Inference}, 3(1):40--58, 2014.

\bibitem{kst}
F.~Krahmer and R.~Ward.
\newblock Lower bounds for the error decay incurred by coarse quantization
  schemes.
\newblock {\em Applied and Computational Harmonic Analysis}, 32(1):131--138,
  2012.

\bibitem{mri}
M.~Lustig, D.~L. Donoho, J.~M. Santos, and J.~M. Pauly.
\newblock Compressed sensing mri.
\newblock {\em IEEE Signal Processing Magazine}, 25(2):72--82, 2008.

\bibitem{kate}
K.~Melnykova and O.~Yilmaz.
\newblock Memoryless scalar quantization for random frames.
\newblock {\em preprint}, 2018.

\bibitem{montgomery}
Hugh~L. Montgomery.
\newblock {\em Ten lectures on the interface between analytic number theory and
  harmonic analysis}.
\newblock American Mathematical Society, 1994.

\bibitem{nikara}
J.~A. Nikara, J.~H. Takala, and J.~T. Astola.
\newblock Discrete cosine and sine transforms: regular algorithms and pipeline
  architectures.
\newblock {\em Signal Processing}, 86(2):230--249, 2006.

\bibitem{plan2}
Y.~Plan and R.~Vershynin.
\newblock Robust 1-bit compressed sensing and sparse logistic regression: a
  convex programming approach.
\newblock {\em Information Theory, IEEE transactions on}, 59(1):482--494,
  December 2012.

\bibitem{plan}
Y.~Plan and R.~Vershynin.
\newblock One-bit compressed sensing by linear programming.
\newblock {\em Communications on pure and Applied Mathematics},
  66(8):1275--1297, 2013.

\bibitem{powell}
A.~Powell, R.~Saab, and O.~Yilmaz.
\newblock {\em Quantization and finite frames, chapter 8 in ``Finite frames:
  Theory and Applications" (edited by P. Casazza and G. Kutyniok)}.
\newblock Birkhauser, Boston, 2012.

\bibitem{rudelson}
M.~Rudelson and R.~Vershynin.
\newblock Sparse reonstruction by convex relaxation: Fourier and gaussian
  measurements.
\newblock {\em 40th Annual Conference on Information Science and Systems},
  pages 207--212, 2006.

\bibitem{encoding}
R.~Saab, R.~Wang, and O.~Yilmaz.
\newblock From compressed sensing to compressed bit-streams: practical
  enncoders, tractable decoders.
\newblock {\em IEEE Transactions on Information Theory}, 64(9):6098--6114,
  2017.

\bibitem{rongrong}
R.~Saab, R.~Wang, and O.~Yilmaz.
\newblock Quantization of compressive samples with stable and robust recovery.
\newblock {\em Applied and Computational Harmonic Analysis}, 44(1):123--143,
  2018.

\bibitem{schreier}
R.~Schreier and G.~Temes.
\newblock {\em Understanding delta-sigma data converters}.
\newblock Wiley, Piscataway, New Jersey, 2004.

\bibitem{wang2}
R.~Wang.
\newblock Sigma-delta quantization with harmonic frames and partial fourier
  ensembles.
\newblock {\em Journal of Fourier Analysis and Applications}, 24(6):1460--1490,
  December 2018.

\bibitem{oy2}
O.~Yilmaz.
\newblock Coarse quantization of highly redundant time-frequency
  representations of square-integrable functions.
\newblock {\em Applied and Computational Harmonic Analysis}, 14(2):107--132,
  2003.

\bibitem{oy1}
O.~Yilmaz.
\newblock On coarse quantization of tight gabor frame expansions.
\newblock {\em International Journal of Wavelets}, 3(2):283--299, 2005.

\end{thebibliography}

% Any other unusual prefactory material should come here before the
% main body.

% Now regular page numbering begins.
%\mainmatter

%\begin{thebibliography}{99} % Bibliography - this is intentionally simple in this template
%
%\bibitem[Figueredo and Wolf, 2009]{Figueredo:2009dg}
%Figueredo, A.~J. and Wolf, P. S.~A. (2009).
%\newblock Assortative pairing and life history strategy - a cross-cultural
%  study.
%\newblock {\em Human Nature}, 20:317--330.
% 
%\end{thebibliography}

%----------------------------------------------------------------------------------------

\end{document}